\newtheorem{theorem}{Theorem}[section]
\newtheorem{proposition}[theorem]{Proposition}
\newtheorem{lemma}[theorem]{Lemma}
\newtheorem{corollary}[theorem]{Corollary}
\theoremstyle{definition}
\newtheorem{remark}[theorem]{Remark}
\numberwithin{equation}{section}
\DeclareMathOperator{\gh}{gh}
\DeclareMathOperator{\pa}{p}
\DeclareMathOperator{\ad}{ad}
\DeclareMathOperator{\End}{End}
\DeclareMathOperator{\Pf}{Pf}
\newcommand{\p}{\partial}
\newcommand{\half}{\tfrac{1}{2}}
\newcommand{\R}{\mathbb{R}}
\newcommand{\N}{\mathbb{N}}
\newcommand{\Z}{\mathbb{Z}}
\newcommand{\Om}{\Omega}
\newcommand{\om}{\omega}
\renewcommand{\o}{\otimes}
\newcommand{\<}{\langle}
\renewcommand{\>}{\rangle}
\renewcommand{\(}{\bigl(\!\bigl(}
\renewcommand{\)}{\bigr)\!\bigr)}
\renewcommand{\]}{]\!]}
\newcommand{\g}{\mathfrak{g}}
\renewcommand{\AA}{\mathbb{A}}
\DeclareMathOperator{\cs}{cs}
\newcommand{\CO}{\mathcal{O}}
\newcommand{\CA}{\mathcal{A}}
\newcommand{\CD}{\mathcal{D}}
\newcommand{\CF}{\mathcal{F}}
\newcommand{\FF}{\mathbb{F}}
\newcommand{\tint}{{\textstyle\int}}
\DeclareMathOperator{\pr}{pr}
\newcommand{\h}{\mathsf{X}}
\renewcommand{\H}{\mathsf{H}}
\renewcommand{\P}{\mathsf{P}}
\DeclareMathOperator{\GL}{GL}
\newcommand{\Wedge}{\Lambda}
\newcommand{\s}{\mathsf{s}}
\renewcommand{\dbar}{\overline{\partial}}
\DeclareMathOperator{\Tr}{Tr}
\DeclareMathOperator{\Sym}{Sym}
\DeclareMathOperator{\gl}{\mathbf{gl}}
\DeclareMathOperator{\so}{\mathbf{so}}
\newcommand{\E}{\mathsf{E}}
\newcommand{\f}{\mathsf{f}}
\title{Batalin-Vilkovisky formality for Chern-Simons theory}
\author{Ezra Getzler}
\affiliation{Department of Mathematics, Northwestern University,
  Evanston, Illinois, 60657 USA}
\emailAdd{getzler@northwestern.edu}
\keywords{Chern-Simons theory, Batalin-Vilkovisky formalism, classical
  master equation}
\abstract{We prove that the differential graded Lie algebra of
  functionals associated to the Chern-Simons theory of a semisimple
  Lie algebra is homotopy abelian. For a general field theory, we show
  that the variational complex in the Batalin-Vilkovisky formalism is
  a differential graded Lie algebra.}
\begin{document}

\maketitle
\flushbottom

\newpage

\section{Introduction}

In the Batalin-Vilkovisky formalism, functionals form a sheaf of
graded Lie superalgebras over spacetime, with differential given by
the inner derivation with respect to the classical action:
\begin{equation*}
  \s \tint f = ( \tint S , \tint f ) .
\end{equation*}
The functionals are graded by the ghost number, and the antibracket
$(\tint f,\tint g)$ has degree $1$. In particular, cohomology classes
of degree $-1$ form a Lie superalgebra, which generate the symmetries
of the theory, cohomology classes of degree $0$ represent
infinitesimal deformations of the action $\tint S$, and cohomology
classes of degree $1$ represent obstructions. The derivation $\s$ is a
differential because $\int S$ satisfies the Batalin-Vilkovisky classical
master equation
\begin{equation*}
  \bigl( \tint S , \tint S \bigr) = 0 .
\end{equation*}

Following Sullivan \cite{Sullivan} and Goldman and Millson \cite{GM},
it is natural to study whether these differential graded (dg) Lie
superalgebras are \textbf{formal}. A formal dg Lie superalgebra $L$ is
a dg Lie superalgebra that is weakly equivalent to its cohomology
$H^*(L)$. In other words, there is a diagram of dg Lie superalgebras
\begin{equation*}
  \begin{tikzcd}[row sep=1.5em]
    & M \ar[dl] \ar[dr] & \\
    H^*(L) & & L
  \end{tikzcd}
\end{equation*}
in which the arrows induce isomorphisms on cohomology. In particular,
if there is a morphism of dg Lie superalgebras $(H^*(L),0) \to (L,d)$
inducing an isomorphism on cohomology, then $L$ is formal. This is the
only situation that will arise in this article.

Formality is met in the deformation theory of compact Calabi-Yau
manifolds: the dg Lie algebra
$\bigl( \Om^{0,*}(M,\Lambda T),\dbar \bigr)$, namely, the Dolbeault
resolution of the Schouten algebra, is formal. By the Tian-Todorov
theorem, more is true: the bracket induced on the cohomology
$H^*(M,\Lambda T)$ by the Schouten bracket vanishes, so this dg Lie algebra
is in fact weakly equivalent to an abelian graded Lie algebra. We say
that $\Om^{0,*}(M,\Lambda T)$ is \textbf{homotopy abelian}.

In this article, we show that a similar property holds for the
Batalin-Vilkovisky cohomology for the Chern-Simons theory associated
to a semisimple Lie algebra $\g$. Chern-Simons theory is a classical
field theory on $\R^3$ associated to a real Lie algebra $\g$ and an
invariant non-degenerate symmetric bilinear form $\<x,y\>$. Let
$A=A_1dt^1+A_2dt^2+A_3dt^3$ be a connection one-form on $\R^3$, with
values in the Lie algebra $\g$. The Chern-Simons theory has action
\begin{equation*}
  \tint \cs_3(A) ,
\end{equation*}
where $\cs_3(A)$ is the Chern-Simons 3-form
\begin{equation*}
  \cs_3(A) = dt^i \, dt^j \, dt^k \bigl( \half \< A_i , \p_j A_k \>
  + \tfrac16 \< A_i , [A_j , A_k] \> \bigr) .
\end{equation*}
The Euler-Lagrange equation is
\begin{equation*}
  \frac{\delta\cs_3(A)}{\delta A_i} = \half \epsilon^{ijk} F_{jk} = 0 ,
\end{equation*}
where $F_{ij}=\p_iA_j-\p_jA_i+[A_i,A_j]$, and the classical solutions
of the theory are precisely the flat connections.

In addition to the gauge field $A$, Batalin and Vilkovisky consider a
fermionic ghost field $c$, also with values in $\g$, of ghost number
$1$, and antifields $A^{+i}$ and $c^+$, of ghost number $-1$ and $-2$
respectively. Let $\epsilon_{ijk}$ and $\epsilon^{ijk}$ be the antisymmetric tensors
with $\epsilon_{123}=1$ and $\epsilon^{123}=1$: these tensors correspond to a choice
of orientation of $\R^3$. The fields $A$ and $c$ and their antifields
$A^+$ and $c^+$ are the components of a superconnection form
\begin{equation*}
  \AA = c + dt^i A_i +  \half \epsilon_{ijk} dt^idt^j A^{+k} + dt^1dt^2dt^3 c^+ .
\end{equation*}
Let $\cs_3(\AA)$ be the Chern-Simons 3-form, the 3-form component of
the differential form
\begin{equation*}
  \half \< \AA , d\AA \> + \tfrac{1}{6} \< \AA ,
  [\AA,\AA] \> .
\end{equation*}
Axelrod and Singer \cite{AS} identified this form as the solution of
the Batalin-Vilkovisky classical master equation for Chern-Simons
theory:
\begin{equation*}
  \half \( \tint \cs_3(\AA) , \tint \cs_3(\AA) \) = 0 .
\end{equation*}

Let $\CA$ be the algebra of differential polynomials in the
coordinates $\{t^1,t^1,t^3\}$ of $\R^3$ and the fields
$\{c,A_i,A^{+i},c^+\}$, or the polynomial algebra generated by the
coordinates on the jet space
\begin{equation*}
  \{ t^i , \p^\alpha c,\p^\alpha A_i,\p^\alpha A^{+i},\p^\alpha c^+ \mid i\in\{1,2,3\} , \alpha \in
  \N^3 \} .
\end{equation*}
This is a module for the Lie algebra of constant coefficient vector
fields spanned by $\{\p_1,\p_2,\p_3\}$, and we may form the
variational complex
\begin{equation*}
  0 \to \R \to \Om^0 \o \CA \xrightarrow{d} \Om^1 \o_\CO \CA
  \xrightarrow{d} \Om^2 \o_\CA \CA \xrightarrow{d} \Om^3 \o_\CA \CA
  \xrightarrow{\int} \CF \to 0 .
\end{equation*}
Here, $\CO$ is the sheaf of functions on $\R^3$, $\Om^k$ is the sheaf
of $k$-forms on $\R^3$ and $d$ is the de Rham differential
\begin{equation*}
  d = \sum_{i=1}^3 dt^i \p_i .
\end{equation*}
The space $\CF$ of functionals is the cokernel of the de Rham
differential $d:\Om^2 \o_\CO \CA \to \Om^3 \o_\CO \CA$, and $\int$ is the
projection from $\Om^3\o_\CO\CA$ to $\CF$. An important result in
variational calculus, due to Vinogradov and Takens, is that the
variational complex is exact (Olver \cite{Olver}*{Theorem~5.56}).

Let $\FF$ be the truncation of this sequence obtained by deleting the
rightmost term $\CF$:
\begin{equation*}
  0 \to \R \to \Om^0 \o_\CO \CA \xrightarrow{d} \Om^1 \o_\CO \CA
  \xrightarrow{d} \Om^2 \o_\CO \CA \xrightarrow{d} \Om^3 \o_\CO \CA \to 0 .
\end{equation*}
Placing $\Om^k\o_\CO\CA^\ell$ in degree $k+\ell-3$, we obtain a resolution
of $\CF$. In Section 2, we show that the Batalin-Vilkovisky
antibracket on $\CF$ lifts to an antibracket $\(-,-\)$ on $\FF$, which
we call the Soloviev antibracket, making $\FF$ into a differential
graded Lie algebra. This generalizes the case of a single independent
variable \cites{Darboux,covariant}. In this general setting, we also
show that a solution $\tint S$ of the Batalin-Vilkovisky classical
master equation lifts to a Maurer-Cartan element of $\FF$, that is, a
solution of the equation
\begin{equation*}
  dS + \half \(S,S\) = 0 .
\end{equation*}
This implies that the differential graded Lie algebra $\CF$, with
differential $\s$ given by the adjoint action of $\tint S$, is
resolved by the differential graded Lie algebra $\FF$ with
differential $d+\s$, where $\s=\(S,-\)$ is given by the adjoint action
of $S$. The theory that we present is more general than is needed for
our discussion of Chern-Simons theory, but we hope that the general
case will find application in future work.

We also show that the differential $d+\s$ is equivalent to the
differential $d+\h_S$ introduced by Gelfand and Dikii, where $\h_S$ is
the Hamiltonian vector field associated to $S$. The differential
$d+\h_S$ has the advantage that it is the total derivative of a double
complex. This means that it is considerably easier to calculate its
cohomology than to calculate directly the cohomology of $d+\s$. On the
other hand, $d+\h_S$ is not a derivation of the Soloviev antibracket,
meaning it has a less transparent relationship to the differential
graded Lie algebra $\CF$.

In Section 4, following Barnich, Brandt and Henneaux \cite{BBH}, we
calculate the cohomology of the differential $d+\h_\Pi$, where $\Pi$ is
the action of the abelian Chern-Simons theory. From this, we obtain a
quasi-ismorphism from the Chevalley-Eilenberg complex
$\tau_{\ge1}C^{*+3}(\g)$ (with vanishing differential) to $\FF$ with
differential $d+\pi$, where $\pi$ is the differential associated to the
abelian Chern-Simons action $\Pi$. This remains a quasi-isomorphism of
differential graded Lie algebras after passage to the non-abelian
Chern-Simons theory. The differential on $\tau_{\ge1}C^*(\g)$ is now the
Chevalley-Eilenberg differential, allowing us to identify the
cohomology of $\CF$ with $\tau_{\ge1}H^*(\g)$. In particular, if $\g$ is
semisimple, Whitehead's lemma implies that the cohomology
$H^*(\CF,\s)$ vanishes in negative degrees, since $H^1(\g)=H^2(\g)=0$.

In summary, we have a diagram of three differential graded Lie
algebras
\begin{equation*}
  \tau_{\ge1}C^{*+3}(\g) \to \FF^* \to \CF^* ,
\end{equation*}
where both arrows are quasi-isomorphisms. When $\g$ is reductive,
there is a quasi-isomorphism
\begin{equation*}
  \tau_{\ge1}H^*(\g) \cong \tau_{\ge1}C^*(\g)^\g \hookrightarrow \tau_{\ge1}C^*(\g) .
\end{equation*}
In the appendix, we show that the bracket induced on
$\tau_{\ge1}C^*(\g)^\g$ by the bracket on $C^*(\g)^\g$ vanishes if
$\g$ is semisimple. This is proved using Chevalley's theorem, that the
algebra $I(\g)$ of invariant polynomials is spanned by trace
polynomials $\Tr_V(\rho(x)^k)$, where $\rho:\g\to\End(V)$ ranges over
finite-dimensional representations of $\g$ and $k$ over positive
integers. This shows that the sheaf of differential graded Lie
algebras $(\CF,\s)$ is homotopy abelian.

The explicit construction of the morphism of complexes
$\tau_{\ge1}C^{*+3}(\g) \to \FF^*$, and the proof that its composition with
$\int:\FF\to\CF$ is a morphism of differential graded Lie algebras, is the
heart of the paper. This may also be viewed as a lift of Proposition
2.5 of Cattaneo and Felder \cite{CF}, which is a statement about
general AKSZ models, from calculus to variational calculus.

To do this, we refine the classical master equation to incorporate
equivariance with respect to the Lie algebra of constant vector fields
on $\R^3$. Just as the Batalin-Vilkovisky classical master equation is
the Maurer-Cartan equation for a differential graded Lie algebra, this
refinement is the Maurer-Cartan equation for a \textbf{curved}
differential graded Lie algebra structure on the space of local
functionals tensored with the algebra of polynomials in variables
$(u_1,u_2,u_3)$ of degree $2$. We call this the \textbf{covariant}
extension of the classical master equation. Consider the differential
polynomials
\begin{equation*}
  G_i = \half \epsilon_{ijk} \< A^{+j} , A^{+k} \> + \< A_i , c^+ \> , \quad
  i\in\{1,2,3\} .
\end{equation*}
Our main result is that the morphism that takes a function $f(c)$ of
the field $c$, that is, an element $f\in C^*(\g)$, to the functional
\begin{equation*}
  \bigl( \tint G_1 , \bigl( \tint G_2 , \bigl( \tint G_3 , \tint f(c)
  \bigr) \bigr) \bigr) \in \CF^{*-3} .
\end{equation*}
is a quasi-isomorphism of differential graded Lie algebras. We are
even able to construct a lift of this morphism to $\FF$, and show that
this is also a morphism of differential graded Lie algebras, although
in the proof that $(\CF,\s)$ is homotopy abelian, we do not really
need that this lifted morphism preserves Lie brackets. However, this
refinement may be of use in future work, especially in the study of
boundary conditions.

\section{The Soloviev bracket and the Batalin-Vilkovisky classical
  master equation}

Let $M$ be a graded supermanifold, with coordinates $x^a$ of ghost
number $\gh(x^a)\in\Z$ and parity $\pa(x^a)\in\Z/2$. The
Batalin-Vilkovisky supermanifold $T^*[-1]M$ has coordinates $x^a$ and
the dual coordinates (known in physics as antifields) $\xi_a$, with
ghost number $\gh(\xi_a)=-1-\gh(x^a)$ and parity
$\pa(\xi_a)=1-\pa(x^a)$. A graded supermanifold carries an action of the
Lie group $\GL(1)$, in other words, for each nonzero constant $a$, a
dilation by $a$. The fixed-point set is a supermanifold in the usual
sense, and by the body of the graded supermanifold, we mean the body
of this fixed-point set. The body of $M$ has as coordinates the degree
$0$ coordinates of even parity of $M$. A sheaf on $M$ is a sheaf on
its body.

Graded supermanifolds generalize graded manifolds, where the ghost
number (degree) determines the parity. The introduction of a bigrading
allows for fermionic physical fields. In Chern-Simons field theory,
such fields are absent, and the parity of a field is the same as the
reduction of its ghost number modulo 2.

Fix a natural number $n$, the number of independent coordinates, and
let $\CO$ be the structure sheaf of $\R^n$. Consider the product
$\R^n\times T^*[-1]M$, whose elements are functions in the coordinates
$\{t^i,x^a,\xi_a\}$. The coordinates $t^i$ are the independent
variables, and the coordinates $x^a$ and $\xi_a$ are referred to
respectively as fields and antifields, or simply as fields.

Introduce jet coordinates $\{x^a_\alpha,\xi_{a,\alpha}\}$, where
$\alpha$ ranges over positive $n$-dimensional multi-indices:
\begin{equation*}
  \alpha \in \{ (\alpha_1,\dots,\alpha_n) \in \N^n \mid \alpha>0 \} .
\end{equation*}
Let $|\alpha|=\alpha_1+\dots+\alpha_n$; the notation $\alpha>0$ indicates that
$|\alpha|>0$.  Let $\CA$ be the sheaf of graded commutative algebras on
$U\times T^*[-1]M$ that are functions of the coordinates
$\{t^i,\p^\alpha x^a,\p^\alpha \xi_a\mid 1\le i\le 3,\alpha\ge0\}$. (Different choices of
function algebras may be made here: in practice, we choose functions
that are (graded) polynomials in the coordinates of positive degree,
power series in the coordinates of negative degree, and polynomial,
real analytic, or infinitely differentiable in the coordinates of zero
degree and the independent variables.) This is the sheaf of functions
on the jet space $J_\infty(\R^n,M)$ of maps from $\R^n$ to $M$; we may
think of the jet coordinates as the Taylor coefficients of a formal
map from $\R^n$ to $M$:
\begin{align*}
  x^a(t) &= \sum_\alpha \frac{t^\alpha x^a_\alpha}{\alpha!} , &
  \xi_a(t) &= \sum_\alpha \frac{t^\alpha \xi_{a,\alpha}}{\alpha!} .
\end{align*}

The space of vertical vector fields on the jet space is a free module
over $\CA$; denote the basis of vertical vector fields given by
partial derivatives with respect to the jet coordinates by
\begin{align*}
  \p_{\alpha,a} &= \frac{\p~}{\p x^a_\alpha} , & \p^a_\alpha &= \frac{\p~}{\p\xi_{a,\alpha}}
  .
\end{align*}

Introduce the vector fields corresponding to infinitesimal translation
in the coordinate $t^i$:
\begin{equation*}
  \p_i = \frac{\p~}{\p t^i} + \sum_\alpha \bigl( x^a_{\alpha+i} \p_{a,\alpha} +
  \xi_{a,\alpha+i} \p^a_\alpha \bigr) . 
\end{equation*}
Here, $\alpha+i$ is an abbreviation for the multi-index
$(\alpha_1,\dots,\alpha_i+1,\dots,\alpha_n)$.  A vector field $\tau$ is evolutionary if
it commutes with the vector fields $\p_i$. Such a vector field is
determined by its action on the coordinates $\{t^i,x^a,\xi_a\}$: it is
given by the explicit formula
\begin{equation*}
  \tau = f^i \p_i + \pr \bigl( g^a \p_a + h_a \p^a \bigr) = f^i \p_i +
  \sum_\alpha \bigl( \p^\alpha g^a \p_{a,\alpha} + \p^\alpha h_a \p^a_\alpha \bigr) ,
\end{equation*}
where $f^i=\tau(t^i)$, $g^a=\tau(x^a)$, $h_a=\tau(\xi_a)$, and the prolongation
is the vector field
\begin{equation*}
  \pr \bigl( g^a \p_a + h_a \p^a \bigr) = 
  \sum_\alpha \bigl( \p^\alpha g^a \p_{a,\alpha} + \p^\alpha h_a \p^a_\alpha \bigr) .
\end{equation*}

The variational derivatives are the operators on $\CA$ given by the
formulas
\begin{align*}
  \delta_a &= \sum_{\alpha\in\N^n} (-\p)^\alpha \p_{\alpha,a} & \delta^a &= \sum_{\alpha\in\N^n} (-\p)^\alpha \p^a_\alpha .
\end{align*}
These operators are not derivations, but rather differential operators
of infinite order.

Let $\FF^{*,*}$ be the bicomplex of sheaves
\begin{equation*}
  \FF^{-p,q} =
  \begin{cases}
    \Wedge^p \R^n \o \CA^q , & 0\le p\le n , \\
    \R\eta , & \text{$p=n+1$ and $q=0$}.
  \end{cases}
\end{equation*}
If $I=\{i_1<\dots<i_p\}\subset\{1,\dots,n\}$, let $\sigma_I=\sigma_{i_1}\dots\sigma_{i_p}$.
If $p\le n$, a typical element of $\FF^{-p,q}$ is a sum
\begin{equation*}
  f = \sum_{I=\{i_1<\dots i_p\}} \sigma_I f_I ,
\end{equation*}
where $\gh(f_I)=q$. The total degree of $f$ is $q-p$. The differential
$d:\FF^{-p,q}\to\FF^{-p+1,q}$ equals
\begin{equation*}
  d \left( \sigma_I f_I \right) = \sum_{\ell=1}^k (-1)^{\ell-1} \, \sigma_{I\setminus i_\ell}
  \p_{i_\ell}f_I
\end{equation*}
if $p\le n$, while the differential from $\FF^{-n-1,q}\to\FF^{-n,q}$ takes
$\eta$ to $\sigma_1\dots\sigma_n$. The element
$\sigma_{i_1}\dots\sigma_{i_k}f_I$ of $\FF$ corresponds to the differential form
\begin{equation*}
  \bigl( \p_{i_1} \lrcorner \dots \p_{i_k} \lrcorner dt^1 \dots dt^d \bigr) f_I .
\end{equation*}
We will tacitly assume this identification of elements of $\FF$ with
differential forms.

We also consider the subspace $\FF_\circ\subset\FF$ made up of functions having
no explicit dependence on the coordinates $\{t^i\}$, and the
subalgebra $\CA_\circ=\CA\cap\FF_\circ\subset\CA$. On this subcomplex, the
evolutionary vector field $\p_i$ restricts to the vector field
\begin{equation*}
  \pr \bigl( x^a_i \p_a + \xi_{a,i} \p^a \bigr) .
\end{equation*}
The differential $d:\FF^{-p,*}\to\FF^{-p+1,*}$ induces a differential on
the subspace $\FF_\circ$.

The cohomology
\begin{equation*}
  \CF = \FF^{0,*} \bigm/ d\FF^{-1,*} \cong \CA \bigm/ \sum_{i=1}^n \p_i\CA
\end{equation*}
of the differential $d$ in $\FF^{0,*}$ is the sheaf of functionals of
the classical field theory. The projection from $\FF^{0,*}\cong\CA$ to
$\CF$ is denoted by $f\mapsto \int f$: in terms of differential forms, it
corresponds to integration over differential forms of top degree. We
also consider the cohomology
\begin{equation*}
  \CF_\circ = \FF_\circ^{0,*} \bigm/ d\FF_\circ^{-1,*} \cong \CA_\circ \bigm/ \sum_{i=1}^n
  \p_i\CA_\circ
\end{equation*}
of the differential $d$ in $\FF_\circ^{0,*}$, consisting of functionals
with no explicit dependence on the independent variables $\{t^i\}$.

The sheaves $\CF$ and $\CF_\circ$ are the main objects of study of the
variational calculus. The sheaves $\FF$ and $\FF_\circ$ may be used to
study $\CF$ and $\CF_\circ$, by the following algebraic Poincar\'e lemma.

\begin{theorem}
  \label{Poincare}
  Suppose that $U\subset\R^n$ and the body of $M$ are star-shaped. Then the
  complexes
  \begin{multline*}
    0 \to \FF^{-n-1,*}(U\times T^*[-1]M) \xrightarrow{d}
    \FF^{-n,*}(U\times T^*[-1]M) \xrightarrow{d} \cdots \\
    \xrightarrow{d} \FF^{-1,*}(U\times T^*[-1]M) \xrightarrow{d}
    \FF^{0,*}(U\times T^*[-1]M) \xrightarrow{\int} \CF(U\times T^*[-1]M) \to 0 .
  \end{multline*}
  and
  \begin{multline*}
    0 \to \FF_\circ^{-n-1,*}(T^*[-1]M) \xrightarrow{d}
    \FF_\circ^{-n,*}(T^*[-1]M) \xrightarrow{d} \cdots \\ \xrightarrow{d}
    \FF_\circ^{-1,*}(T^*[-1]M) \xrightarrow{d} \FF_\circ^{0,*}(T^*[-1]M)
    \xrightarrow{\int} \CF_\circ(T^*[-1]M) \to 0 .
  \end{multline*}
  are exact.
\end{theorem}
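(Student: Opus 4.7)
The claim is the algebraic Poincar\'e lemma for the horizontal variational bicomplex, extended from the classical smooth setting (Olver, Theorem~5.56) to the graded super target $T^*[-1]M$. My plan is to combine an explicit contracting homotopy for the variational bicomplex with the star-shapedness assumptions to obtain global existence of that homotopy, then to handle the graded super setting by tracking Koszul signs.

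I would begin with the $\FF_\circ$ case. Filter $\CA_\circ$ by jet order and argue inductively on the top order that appears. The essential input is Taylor expansion with respect to the top-order jet coordinates, combined with integration-by-parts identities that convert top-order derivatives into lower-order terms plus total derivatives. This produces, at each jet-order truncation, an explicit local chain homotopy between the identity and an evaluation map onto the zero-jet subalgebra. Passing to the direct limit in the jet order yields a global homotopy on $\FF_\circ^{*,*}$. After all jet-coordinate dependence has been contracted away, what remains is the subcomplex $\Wedge^*\R^n$ on which the induced differential vanishes (since $\p_i$ kills constants), and the only surviving cohomology class is represented by $\sigma_1\dots\sigma_n$, i.e., by the generator $\R\eta$ at the left end.

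For the first complex I combine this fiber-direction homotopy with the usual Poincar\'e homotopy on $U$: star-shapedness of $U$ provides a ray-integration chain contraction of $\Wedge^*\R^n$-valued forms on $U$, and the two homotopies combine to give an overall contraction of $\FF$ onto $\R\eta$. The cokernel at the right-hand end is $\CF$ by definition of the variational projection $\int$, giving exactness there; surjectivity of $\int$ is automatic and the augmentation identity $\ker\int=d\FF^{-1,*}$ is exactly what the homotopy provides.

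The main obstacle will be the careful construction of the jet-order-by-jet-order homotopy in the graded super setting: one must verify that the Taylor-expansion and integration-by-parts identities respect Koszul sign conventions when the jet coordinates $x^a_\alpha,\xi_{a,\alpha}$ carry nontrivial ghost number and parity, and that each step preserves the function algebra $\CA$ (polynomial in positive-degree coordinates, power series in negative-degree coordinates, and smooth or real analytic in coordinates of degree zero and in the independent variables). Since the homotopy operators only lower polynomial degrees and jet orders, this algebraic-structure preservation is automatic once one checks it on generators, but the sign bookkeeping requires some care.
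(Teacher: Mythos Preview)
Your approach differs from the paper's, and it has a real gap.

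The paper does not give a self-contained proof; it cites Olver (Theorem~5.56) and records Anderson's explicit homotopy operators $\H$ and $\P$. The key formula is
\[
  \H f = \sum_{i=1}^n \sigma_i \sum_{0<\beta\le\alpha} \frac{\beta_i}{|\beta|+n-p} \binom{\alpha}{\beta}
  \p^{\beta-i} \int_0^1 \Bigl( x^a (-\p)^{\alpha-\beta} \p_{a,\alpha} f(\lambda u) + \xi_a
  (-\p)^{\alpha-\beta} \p^a_\alpha f(\lambda u) \Bigr) \frac{d\lambda}{\lambda},
\]
together with an analogous $\P:\CF_\circ\to\FF_\circ^{0,*}$. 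This is not a jet-order induction: it is a single global homotopy built from a radial rescaling $u\mapsto\lambda u$ of \emph{all} jet coordinates, including the zero-order ones $x^a,\xi_a$. That rescaling is exactly where the hypothesis that the body of $M$ is star-shaped enters. For $\FF$ one adds the usual Poincar\'e homotopy on $U$, which is where the star-shapedness of $U$ enters.

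Your scheme never invokes the star-shapedness of the body of $M$, and this is the gap. A filtration by jet order, with Taylor expansion in the top-order jet variables and integration by parts, at best reduces you to functions of the zero-order fields $x^a,\xi_a$; it does not reduce you to the constants $\Wedge^*\R^n$ as you assert. On functions of $x^a,\xi_a$ alone the differential $d$ is not zero (it produces first-order jets), so this subspace is not a subcomplex, and in any case the passage from zero-order field dependence to constants is a genuine contraction of the target $T^*[-1]M$. That contraction needs the body of $M$ to be star-shaped; without it the statement is false (think of $M$ with nontrivial topology). A second, smaller issue: since $d$ strictly raises jet order, your filtration is not $d$-stable in the direction you need, so the ``inductive descent'' is not as straightforward as you suggest; you would have to set this up carefully as a spectral sequence argument, and its $E_1$-page will still see the zero-order field dependence.

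If you want to keep an inductive flavor, the honest version is: first contract the affine fibers $J^k\to J^{k-1}$ (this is where your integration-by-parts intuition lives), then contract the zero-section $U\times M$ using the star-shaped hypotheses on both factors. But at that point you are essentially reconstructing Anderson's radial homotopy in two stages rather than one.
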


We will need some of the details of the proof of this theorem. The
proof in Olver \cite{Olver}*{Theorem~5.56, (5.109), (5.111)} is based
on explicit formulas (apparently due to Anderson) for a homotopy
operator
\begin{equation*}
  \H : \FF_\circ^{-p,q} \to \FF_\circ^{-p-1,q}
\end{equation*}
and an operator
\begin{equation*}
  \P : \CF_\circ \to \FF_\circ^{0,*}
\end{equation*}
such that the operators $\tint \P : \CF_\circ \to \CF_\circ$,
\begin{equation*}
  \H d + \P \tint : \FF^{0,*}_\circ \to \FF^{0,*}_\circ
\end{equation*}
and
\begin{equation*}
  d\H + \H d : \FF_\circ^{-p,*} \to \FF_\circ^{-p,*} , \qquad p>0 ,
\end{equation*}
all equal the identity. The operator $\H$ is given on
$f\in\FF_\circ^{-p,*}$, $0\le p<n$, by the formula
{\small
\begin{equation*}
  \H f = \sum_{i=1}^n \sigma_i \sum_{0<\beta\le\alpha} \frac{\beta_i}{|\beta|+n-p} \binom{\alpha}{\beta}
  \p^{\beta-i} \int_0^1 \Bigl(  x^a (-\p)^{\alpha-\beta} \p_{a,\alpha} f(\lambda u) + \xi_a
  (-\p)^{\alpha-\beta} \p^a_\alpha f(\lambda u) \Bigr) \frac{d\lambda}{\lambda} ,
\end{equation*}}
and on $\FF_\circ^{-n-1,0}$ by
$H\eta=\sigma_1\dots\sigma_n\in\FF^{-n,0}$. The operator $\P$ is given on
$\int f\in\CF_\circ$, by the formula
\begin{equation*}
  \P \tint f = \int_0^1 \Bigl(  x^a \delta_a f(\lambda u) + \xi_a \delta^a f(\lambda u) \Bigr)
  \frac{d\lambda}{\lambda} .
\end{equation*}

Filter $\FF_\circ$ by the subspaces
\begin{equation*}
  F^p\FF_\circ = \sum_{r\ge p} \FF_\circ^{-r,*} .
\end{equation*}
The operator $\H$ takes $F^p\FF_\circ$ to $F^{p+1}\FF_\circ$, and
$F^p\FF_\circ=0$ for $p>n+1$, and satisfies the formula $d\H+\H d=1$ on
$F^1\FF_\circ$.

The proof of Theorem \ref{Poincare} for $\FF$ uses a more complicated
homotopy $\H^*$, which is the sum of a homotopy $\H$ given by the same
expression as for $\FF_\circ$, plus the homotopy for differential forms on
the star-shaped set $U$ that figures in the usual proof of the
Poincar\'e lemma. The corresponding projection $\P:\CF\to\FF^{0,*}$ is
given by the same expression as for $\CF_0\to\FF_\circ^{0,*}$. The homotopy
$\H^*$ again satisfies the formula $d\H^*+\H^*d=1$ on $F^1\FF$.

The sheaf $\CF$ is a graded Lie superalgebra with respect to the
Batalin and Vilkovisky (anti)bracket
\begin{equation*}
  ( \tint f , \tint g ) = \sum_a (-1)^{(\pa(f)+1)\pa(\xi_a)} \tint \bigl(
  \delta_af \,\delta^a g + (-1)^{\pa(f)} \delta^a f \, \delta_a g \bigr) .
\end{equation*}
A classical field theory is a solution $\int S\in\CF$, of ghost number
$0$ and even parity, to the classical master equation (Maurer-Cartan
equation) of Batalin and Vilkovisky:
\begin{equation}
  \label{Master}
  \half ( \tint S , \tint S ) = 0 .
\end{equation}
We refer to $\tint S$ as the action of the theory. The operator
\begin{equation*}
  \s \tint f = ( \tint S , \tint f )
\end{equation*}
makes $\CF$ into a differential graded Lie superalgebra; the
cohomology $H^*(\CF,\s)$ is a graded Lie superalgebra, called the BV
cohomology of the theory. (Some authors refer to this as the BRST
cohomology.) The Batalin-Vilkovisky antibracket preserves the subsheaf
$\CF_\circ\subset\CF$. We restrict attention in this paper to classical field
theories whose action $\int S\in\CF_\circ$ has no explicit dependence on the
independent variables $\{t^i\}$.

The resolution $\FF$ of $\CF$ is a differential graded Lie
superalgebra, with respect to a bracket introduced by Soloviev
\cite{Soloviev}, and adapted to the Batalin-Vilkovisky formalism (at
least, when $n=1$) in \cites{Darboux,covariant}. This (anti)bracket is
given by the explicit formula
\begin{multline*}
  \( \sigma_I f_I , \sigma_J g_J \) = \sum_a (-1)^{(\pa(f_I)+1)(\pa(x_a)+|J|)} \\
  \sigma_I \sigma_J \sum_{\alpha,\beta\in\N^n} \left( \p^\beta\p_{a,\alpha}f_I \, \p^\alpha\p^a_\beta g_J +
    (-1)^{\pa(f_I)} \p^\beta\p^a_\alpha f_I \, \p^\alpha\p_{a,\beta} g_J \right) .
\end{multline*}
The Soloviev (anti)bracket preserves the subsheaf $\FF_\circ\subset\FF$.
\begin{proposition}
  \label{lift}
  Given a classical field theory, represented by a solution
  $\int S\in\CF_\circ$ of the classical master equation, there is a lift
  $S\in\FF_\circ$ of $\int S$ to a solution of the Maurer-Cartan equation in
  $\FF_\circ$:
  \begin{equation}
    \label{master}
    dS + \half \( S , S \) = 0 .
  \end{equation}
\end{proposition}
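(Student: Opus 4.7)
The plan is to build $S$ inductively in the bidegree filtration of $\FF_\circ$. A total-degree-zero element has the form $S = \sum_{p=0}^{n} S_p$ with $S_p \in \FF_\circ^{-p,p}$, and equation (\ref{master}) decomposes into bidegree components
\begin{equation*}
  \Phi_{-p} := d S_{p+1} + \tfrac{1}{2} \sum_{q+r=p} \( S_q , S_r \) = 0 \quad \text{in } \FF_\circ^{-p, p+1}, \qquad p = 0, 1, \ldots, n,
\end{equation*}
with the convention $S_{n+1} = 0$. I would solve these successively, using $\P$ to produce $S_0$ and $\H$ to produce each subsequent $S_{p+1}$.

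For the base case, set $S_0 = \P \tint S$, which lifts $\tint S$ since $\tint \P = 1$. The equation $\Phi_0 = 0$ reads $d S_1 = -\tfrac{1}{2} \( S_0, S_0 \)$, and applying $\tint$ to the right-hand side gives $\tfrac{1}{2}(\tint S, \tint S) = 0$ by hypothesis, so by Theorem~\ref{Poincare} it lies in $d\FF_\circ^{-1,1}$. The identity $d\H + \P\tint = 1$ on $\FF_\circ^{0,*}$ makes the lift explicit: $S_1 = -\tfrac{1}{2} \H \( S_0, S_0 \)$. For the inductive step, assume $\Phi_{-p'} = 0$ for $p' < p$ and set $\omega_p = \tfrac{1}{2} \sum_{q+r=p} \( S_q, S_r \) \in \FF_\circ^{-p, p+1}$. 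Once the closedness $d\omega_p = 0$ is established, one takes $S_{p+1} = -\H\omega_p$, since the homotopy identity $d\H + \H d = 1$ on $\FF_\circ^{-p,*}$ then yields $d S_{p+1} = -d\H\omega_p = -\omega_p + \H d\omega_p = -\omega_p$.

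The closedness $d\omega_p = 0$ comes from a formal Bianchi identity. Using $d^2 = 0$, the graded derivation property of $d$ with respect to $\( -, - \)$, and the Jacobi identity $\( \( S, S \), S \) = 0$ on total-degree-zero elements, one obtains $d\Phi = \( \Phi, S \)$ for any $S$ (not just a solution of the master equation). Applying this to the partial sum $S_0 + \cdots + S_p$ (with higher components temporarily set to zero, so that $\Phi_{-p} = \omega_p$) and extracting the bidegree $(-p+1, p+1)$ piece gives
\begin{equation*}
  d\omega_p = \sum_{p'+q = p-1} \( \Phi_{-p'}, S_q \) ,
\end{equation*}
which vanishes termwise by the inductive hypothesis. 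At the terminal step $p = n$, the same argument gives $d\omega_n = 0$; then $\omega_n = d\H\omega_n + \H d\omega_n$, and both terms vanish because $\H\omega_n \in \FF_\circ^{-n-1, n+1} = 0$ and $d\omega_n = 0$. Hence $\omega_n = 0$ outright and the convention $S_{n+1} = 0$ is consistent, completing the construction.

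The main obstacle is the formal Bianchi identity, which reduces to the derivation property of $d$ and the graded Jacobi identity for the Soloviev antibracket — structural facts about $(\FF_\circ, d, \( -, - \))$ established alongside Proposition~\ref{lift} in this section. Granted these, the construction is a standard obstruction-theoretic lift along the acyclic resolution provided by Theorem~\ref{Poincare}, with $\P$ supplying the zeroth-order term and $\H$ inverting $d$ at every subsequent stage.
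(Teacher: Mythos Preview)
Your proof is correct and follows essentially the same approach as the paper: both define $S_0$ as a lift of $\tint S$, take $S_1$ so that $dS_1+\half\(S_0,S_0\)=0$, and then set $S_{k}=-\H\bigl(\half\sum_{j=0}^{k-1}\(S_j,S_{k-1-j}\)\bigr)$ for $k>1$, invoking the Jacobi identity for the Soloviev bracket and the contracting homotopy $\H$ from Theorem~\ref{Poincare}. The only difference is packaging: you verify the Maurer--Cartan equation bidegree by bidegree via the Bianchi identity $d\Phi=\(\Phi,S\)$, whereas the paper assembles everything into the single curvature $T=dS+\half\(S,S\)$, derives the fixed-point relation $T=-\H\(S,T\)$, and concludes $T=0$ from the fact that $\H$ raises the filtration.
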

\begin{proof}
  Since $\half(\tint S_0,\tint S_0)=0\in\CF_\circ$, we see that there exists
  an element $S_1\in\FF_\circ^{-1,1}$ such that
  \begin{equation}
    \label{S1}
    dS_1+\half\(S_0,S_0\)=0 .
  \end{equation}
  We now define $S_k\in\FF_\circ^{-k,k}$ for $k>1$ by the recursive formula
  \begin{equation*}
    S_k = - \half \sum_{j=0}^{k-1} \H \( S_j , S_{k-j-1} \) .
  \end{equation*}
  Equivalently, $S$ is the fixed point of the equation
  \begin{equation*}
    S = S_0 + S_1 - \half \H \Bigl( \(S,S\) - \(S_0,S_0\) \Bigr) .
  \end{equation*}

  Let $T=dS+\half\(S,S\)$. We see that
  \begin{equation*}
    dS = dS_0 + dS_1 - \half \Bigl( \(S,S\) - \(S_0,S_0\) \Bigr) +
    \half \H \Bigl( d\(S,S\) - d\(S_0,S_0\) \Bigr) .
  \end{equation*}
  Since $dS_0=0$, this equals
  \begin{equation*}
    dS = dS_1 + \( S_0 , S_0 \) - \half \(S,S\)  \H \(S,dS\) .
  \end{equation*}
  By \eqref{S1}, we see that
  \begin{equation*}
    T = - \H \(S,dS\) .
  \end{equation*}
  By the Jacobi relation for the Soloviev bracket, $\(S,\(S,S\)\)=0$,
  hence
  \begin{equation*}
    T = - \H \(S,T\) .
  \end{equation*}
  Since $\H:F^p\FF_\circ\to F^{p+1}\FF_\circ$ raises filtration degree in
  $\FF_\circ$, we see, by induction on $p$, that $T$ lies in
  $F^p\FF_\circ$. Since $F^p\FF_\circ$ vanishes if $p>n+1$, we conclude that
  $T=0$, proving the theorem.
\end{proof}

We call $S$ the Lagrangian density of the theory. The lift of $S$ to
$\FF_\circ$ yields insight into the problem of extending field theories to
manifolds with boundary, and possibly corners. For a different
approach to these questions, see Cattaneo et al. \cite{CMR}.

Define an operator
\begin{equation*}
  \s f = \( S , f \)
\end{equation*}
on $\FF$. By the classical master equation \eqref{master}, the
operator $d+\s$ on $\FF$ is a differential,
\begin{equation*}
  [d,\s] + \s^2 = 0 ,
\end{equation*}
and gives a lift of the differential $\s$ on $\CF$ to $\FF$. Equipped
with this differential, $\FF$ becomes a sheaf of differential graded
Lie algebras.
\begin{proposition}
  The morphism of sheaves of differential graded Lie superalgebras
  from $(\FF,d+\s)$ to $(\CF,\s)$ given by $f\mapsto\tint f$ is a
  quasi-isomorphism.
\end{proposition}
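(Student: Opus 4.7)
The plan is to run a standard spectral-sequence argument based on the ghost-number filtration. I filter $\FF$ by the decreasing filtration $G^q\FF=\bigoplus_{s\ge q}\FF^{*,s}$. First I check that $d+\s$ respects this filtration: the differential $d$ acts only on the form index and preserves $q$, while, using the decomposition $S=\sum_k S_k$ with $S_k\in\FF_\circ^{-k,k}$ from Proposition~\ref{lift} (the sum is finite because $F^p\FF_\circ=0$ for $p>n+1$) and the fact that the Soloviev antibracket has bidegree $(0,1)$ in $(-p,q)$, each $\s_k=\(S_k,-\)$ sends $\FF^{-p,q}$ to $\FF^{-p-k,q+k+1}$. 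Thus $\s=\sum_k\s_k$ strictly increases $q$ by at least $1$, and $d+\s$ preserves $G^q$.

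Convergence is automatic because $p$ is confined to $\{0,1,\dots,n+1\}$: for any fixed total degree $m=q-p$, the values of $q$ lie in the finite set $\{m,\dots,m+n+1\}$, so the filtration is finite on each total-degree component and the spectral sequence converges to $H^*(\FF,d+\s)$. On the $E_0$-page only the $d$-piece of $d+\s$ survives, and applying Theorem~\ref{Poincare} in each fixed $q$ (legitimate since $d$ preserves $q$) identifies $E_1$ with $\CF^*$ concentrated in the single column $p=0$, the identification being given by $\tint$.

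The induced $E_1$-differential is the filtration-shift-$1$ piece of $\s$, namely $\s_0=\(S_0,-\)$. Under $\tint$ this becomes $\tint\(S_0,f\)=(\tint S_0,\tint f)=(\tint S,\tint f)=\s\tint f$, because $\tint$ intertwines the Soloviev bracket with the Batalin--Vilkovisky antibracket and $\tint S=\tint S_0$ (the higher corrections $S_k$ for $k\ge 1$ lie in $\FF^{-k,*}$ and hence project to zero in $\CF$). Thus $E_2=H^*(\CF,\s)$; since $E_2$ is supported in a single column, the higher differentials $d_r$ for $r\ge 2$ necessarily leave that column and must therefore vanish, giving $E_\infty=E_2$. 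The edge map associated to the column $p=0$ is visibly the map induced by $\tint$, which is the required quasi-isomorphism.

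The main input is the algebraic Poincar\'e lemma, Theorem~\ref{Poincare}; the rest is a routine spectral-sequence manipulation. The only step that requires any care is the identification of the $E_1$-differential with $\s$ on $\CF$, which reduces to the chain-map identity $\tint\circ\s=\s\circ\tint$ on $\FF^{0,*}$ together with the observation that the higher components $S_k$ drop out after projection.
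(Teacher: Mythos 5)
Your proof is correct, and it is essentially the argument the paper leaves implicit: the proposition is stated there without proof as a consequence of the algebraic Poincar\'e lemma (Theorem~\ref{Poincare}), which is exactly the input your ghost-number filtration uses, with the bidegree bookkeeping for $d$, $\s_k$ and the finiteness of the form-degree range handled correctly. The only point worth making explicit is that Theorem~\ref{Poincare} assumes $U$ and the body of $M$ are star-shaped, so the quasi-isomorphism of sheaves should be verified stalkwise (or on star-shaped opens), which is all the statement requires.
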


The operator $\s$ is a graded derivation of the graded Lie algebra
$\FF$. There is another operator $\tilde{\s}$ defined using $S$ that
is a graded derivation of the product on $\FF$, namely the Hamiltonian
vector field $\h_S$ associated to $S$. The (Batalin-Vilkovisky)
Hamiltonian vector field associated to an element $f\in\FF$ is the
evolutionary vector field given by the formula
\begin{equation*}
  \h_f(g) = \sum_{I,J\in\N^n} \h_{\sigma_I f_I} \bigl( \sigma_J g_J \bigr) ,
\end{equation*}
where
\begin{align*}
  \h_{\sigma_I f_I} \bigl( \sigma_J g_J \bigr) &= \sum_a
  (-1)^{(\pa(f_I)+1)(\pa(x_a)+|J|)} \sigma_I \sigma_J \pr \bigl( \delta_{a,\alpha}f_I \,
  \p^a g_J + (-1)^{\pa(f_I)} \delta^a_\alpha f_I \, \p_a g_J \bigr) \\
  &= \sum_a (-1)^{(\pa(f_I)+1)(\pa(x_a)+|J|)} \\
  &\qquad \sigma_I \sigma_J \sum_{\alpha,\beta\in\N^n} (-1)^{|\alpha|} \bigl( \p^{\alpha+\beta} \p_{a,\alpha}f_I \, \p^a_\beta
  g_J + (-1)^{\pa(f_I)} \p^{\alpha+\beta}\p^a_\alpha f_I \, \p_{a,\beta} g_J \bigr) .
\end{align*}

To $f\in\FF$, we have now associated two different operators on $\FF$,
namely the infinite-order differential operator $\ad(f)$ and the
vector field $\h_f$. It turns out that these operators are homotopic.
\begin{lemma}
  \label{homotopy}
  Let
  \begin{multline*}
    C \bigl( \sigma_I f_I , \sigma_J g_J \bigr) = \sum_{i=1}^n \sum_a
    (-1)^{(\pa(f_I)+1)(\pa(x_a)+|J|)} \sigma_i \sigma_I \sigma_J \\
    \sum_{\alpha,\beta\in\N^n} \sum_{0<\gamma\le\alpha} \frac{\gamma_i}{|\gamma|} \binom{\alpha}{\gamma}
    (-\p)^{\gamma-i} \bigl( \p^\beta \p_{a,\alpha} f_I \, \p^{\alpha-\gamma} \p^a_\beta g_J +
    (-1)^{\pa(f_I)} \p^\beta \p^a_\alpha f_I \, \p^{\alpha-\gamma} \p_{a,\beta} g_J \bigr) ,
  \end{multline*}
  where $\gamma-i$ is the multi-index $(\gamma_1,\dots,\gamma_i-1,\dots,\gamma_n)$. Then
  \begin{equation*}
    dC(f,g) = \h_fg - \( f , g \) + C(df,g) + (-1)^{\pa(f)} C(f,dg)
  \end{equation*}
\end{lemma}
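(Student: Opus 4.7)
The identity is a chain-level refinement of the standard fact that, in the BV formalism, $\tint \h_f g = \tint \(f, g\)$: the lemma provides an explicit primitive $C(f, g)$ for the difference $\h_f g - \(f, g\)$, modulo correction terms involving $df$ and $dg$. The plan is a direct multi-index computation. Observing that the three expressions $\h_f g$, $\(f, g\)$, and $C(f, g)$ are bilinear in the components $f_I, g_J$ with the $\sigma$-factors $\sigma_I, \sigma_J$ (and, in the case of $C$, an extra $\sigma_i$) appearing in a parallel manner, I would first treat the case $f, g \in \FF^{0,*}$, for which $df = dg = 0$; the identity then reduces to $dC(f, g) = \h_f g - \(f, g\)$.

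To establish this core identity, set $X_\alpha = \p_{a,\alpha} f$ and $Y_\beta = \p^a_\beta g$, so that the first summand of $C(f, g)$ is $\sum_i \sigma_i \sum_{\alpha,\beta} \sum_{0 < \gamma \le \alpha} \frac{\gamma_i}{|\gamma|} \binom{\alpha}{\gamma} (-\p)^{\gamma-i}(\p^\beta X_\alpha \cdot \p^{\alpha-\gamma} Y_\beta)$. Applying $d$ converts each $\sigma_i$ into $\p_i$, and $\p_i (-\p)^{\gamma-i} = (-1)^{|\gamma|-1} \p^\gamma$; the identity $\sum_i \gamma_i/|\gamma| = 1$ then collapses the sum over $i$. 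Expanding $\p^\gamma$ by the Leibniz rule, interchanging the orders of the $\gamma$- and $\delta$-summations, and using $\binom{\alpha}{\gamma}\binom{\gamma}{\delta} = \binom{\alpha}{\delta}\binom{\alpha-\delta}{\gamma-\delta}$, the inner sum becomes $\binom{\alpha}{\delta}(-1)^{|\delta|-1}\sum_{\gamma'}(-1)^{|\gamma'|}\binom{\alpha-\delta}{\gamma'}$, which by the componentwise binomial identity $(1-1)^n=0$ vanishes unless $\delta = \alpha$ or $\delta = 0$. The two surviving boundary terms match exactly the two terms of $\h_f g - \(f, g\)$; the antifield summand of $C$ is handled identically under the substitution $\p_a \leftrightarrow \p^a$.

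Reinstating the $\sigma$-indices, the differential $d$ acts on $C(f, g) = \sigma_i \sigma_I \sigma_J(\cdot)$ by removing one of the $\sigma$-factors (with the corresponding sign) and applying the associated $\p$. Removing $\sigma_i$ recovers the core identity and yields $\h_f g - \(f, g\)$; removing a factor from $\sigma_I$ reassembles into $C(df, g)$; removing a factor from $\sigma_J$ gives $(-1)^{\pa(f)} C(f, dg)$, the overall sign arising from commuting $d$ past $f$ in the graded sense. The principal obstacle is sign bookkeeping: the Koszul signs from the parities $\pa(x^a), \pa(f_I), \pa(g_J)$, together with the signs from $\sigma$-permutations and from $(-\p)^{\gamma-i}$, must align precisely. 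Once the signs are under control, the combinatorial content of the lemma is the multi-index version of the standard single-variable integration-by-parts telescope, with the factor $\gamma_i/|\gamma|$ serving as an equivariant symmetrization over the $n$ coordinate directions.
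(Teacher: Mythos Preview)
Your approach is correct and coincides with the paper's, only with the computation run in the opposite direction and the combinatorics spelled out. The paper's entire proof is the single displayed formula
\[
\bigl(\h_{\sigma_If_I}-\ad(\sigma_If_I)\bigr)(\sigma_Jg_J)
=\cdots\sum_{0<\gamma\le\alpha}\tbinom{\alpha}{\gamma}(-\p)^\gamma\bigl(\p^\beta\p_{a,\alpha}f_I\cdot\p^{\alpha-\gamma}\p^a_\beta g_J+\cdots\bigr),
\]
obtained by subtracting the Soloviev bracket from the Hamiltonian vector field term by term; the lemma then follows because $C$ is manifestly a $d$-primitive for any expression of the form $\sigma_I\sigma_J\,(-\p)^\gamma(\cdot)$ with $\gamma>0$, via the symmetrization $\sum_i\sigma_i\,\gamma_i/|\gamma|\,(-\p)^{\gamma-i}$. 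Your binomial telescope (Leibniz on $\p^\gamma$, the identity $\tbinom{\alpha}{\gamma}\tbinom{\gamma}{\delta}=\tbinom{\alpha}{\delta}\tbinom{\alpha-\delta}{\gamma-\delta}$, and $(1-1)^{\alpha-\delta}=0$) is exactly what verifies that displayed formula, and your treatment of the $\sigma_I,\sigma_J$ factors to produce the $C(df,g)$ and $C(f,dg)$ corrections is the part the paper leaves entirely implicit.
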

\begin{proof}
  This is a consequence of the formula
  \begin{align*}
    & \bigl( \h_{\sigma_If_I} - \ad(\sigma_If_I) \bigr) \bigl( \sigma_J g_J \bigr) =
    \sum_I \sum_a (-1)^{(\pa(f_I)+1)(\pa(x_a)+|J|)} \\
    & \qquad \sigma_I \sigma_J \sum_{\alpha,\beta\in\N^n} \sum_{0<\gamma\le\alpha} \textstyle \binom{\alpha}{\gamma} (-\p)^\gamma
    \bigl( \p^\beta \p_{a,\alpha} S_I \, \p^{\alpha-\gamma} \p^a_\beta g_J + (-1)^{\pa(f_I)}
    \p^\beta \p^a_\alpha S_I \, \p^{\alpha-\gamma} \p_{a,\beta} g_J \bigr) . \qedhere
  \end{align*}
\end{proof}
 
Consider the operator
\begin{equation*}
  \tilde{\s} = \h_S = \sum_I \h_{\sigma_IS_I} .
\end{equation*}
The following theorem, relating the action of the operators $\s$ and
$\tilde{\s}$, is the first main result of this paper.
\begin{theorem}
  \label{main}
  Suppose that $U\subset\R^n$ and the body of $M$ is star-shaped. Then there
  is an automorphism
  \begin{equation*}
    \alpha = 1 + \sum_{k=1}^\infty \alpha_k ,
  \end{equation*}
  where
  $\alpha_k:\FF^{p,q}(U\times T^*[-1]M)\to\FF^{p-k,q+k}(U\times T^*[-1]M)$ is an
  operator of even parity, that intertwines the operators $d+\s$ and
  $d+\tilde\s$:
  \begin{equation*}
    (d+\tilde{\s})\alpha = \alpha(d+\s) .
  \end{equation*}
\end{theorem}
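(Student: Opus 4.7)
I would reformulate the intertwining relation $(d+\tilde\s)\alpha = \alpha(d+\s)$ as a recursion. Setting $\Delta = \tilde{\s}-\s$ and writing $\alpha = 1+\sum_{k\ge 1}\alpha_k$, the relation becomes $[d+\s,\alpha_k] = -\Delta\,\alpha_{k-1}$ for each $k\ge 1$, with $\alpha_0=1$. The plan is to build the $\alpha_k$ inductively, using the filtration $F^p\FF$ to organize the construction, and invoking the homotopy operator $\H$ of Theorem \ref{Poincare} to invert $d$ on higher filtration.

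The seed of the induction is Lemma \ref{homotopy} applied with $f=S$. Since $\pa(S)=0$ and $dS = -\half\(S,S\)$ by the master equation of Proposition \ref{lift}, the lemma yields
\begin{equation*}
  \Delta(g) \;=\; dC(S,g) + \half C\bigl(\(S,S\),g\bigr) - C(S,dg).
\end{equation*}
The operator $g\mapsto C(S,g)$ strictly raises the filtration by one because of the explicit factor $\sigma_i$ in the definition of $C$; so does the operator $g\mapsto C(\(S,S\),g)$. Consequently, modulo operators landing in strictly higher filtration, $\Delta$ coincides with $[d,-C(S,-)]$, and one can take $\alpha_1 = -C(S,-)$ (up to a correction absorbing the $\s$-commutator and the $C(S,dg)$ term, both of which live in one filtration level deeper and so get folded into $\alpha_2$).

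For the inductive step, assume $\alpha_1,\dots,\alpha_{k-1}$ have been constructed with $\alpha_j$ raising filtration by $j$, so that the obstruction $-\Delta\,\alpha_{k-1}$ lives in $F^k\FF$. I would first check that this obstruction is $(d+\s)$-closed: expanding $[d+\s,\Delta\,\alpha_{k-1}]$ and using (i) that both $d+\s$ and $d+\tilde\s$ are differentials (so $[d+\s,\Delta]+\Delta^2 = 0$), (ii) the graded Jacobi identity for the Soloviev antibracket, and (iii) the inductive hypothesis $[d+\s,\alpha_{k-1}] = -\Delta\alpha_{k-2}$, one obtains cancellation. Then, since the homotopy $\H$ from Theorem \ref{Poincare} satisfies $d\H+\H d=1$ on $F^1\FF$ and raises filtration by one, I would set $\alpha_k = -\H\bigl(\Delta\,\alpha_{k-1}\bigr)$, plus a correction term of the form $-\H\s\alpha_{k-1}$ or absorbed via the same homotopy, to ensure $[d+\s,\alpha_k]=-\Delta\alpha_{k-1}$. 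Termination of the series follows from $F^p\FF = 0$ for $p>n+1$, so applied to any $g\in\FF^{p,q}$ only finitely many $\alpha_k$ contribute; invertibility of $\alpha$ is automatic since it is a unipotent perturbation of the identity with respect to the filtration.

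The main obstacle is the closedness check at each step of the induction: the argument that $(d+\s)\bigl(\Delta\,\alpha_{k-1}\bigr)=0$ in $F^{k+1}\FF$ requires careful bookkeeping with signs, the Soloviev Jacobi identity, and the master equation, and amounts to a cohomological perturbation argument tailored to the bicomplex $\FF^{\bullet,\bullet}$. Once this is verified, the construction of $\alpha$ is a formal consequence of the Poincar\'e lemma and the filtration being bounded.
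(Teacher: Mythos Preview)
Your overall strategy---build the $\alpha_k$ inductively using the contracting homotopy $\H$ and terminate via the bounded filtration---is the right one, and it is essentially what the paper does. But two real gaps stand between your sketch and a proof.

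First, the recursion $[d+\s,\alpha_k]=-\Delta\,\alpha_{k-1}$ is not what one obtains from $(d+\tilde\s)\alpha=\alpha(d+\s)$ by taking components. Since $S=\sum_j S_j$ with $S_j\in\FF_\circ^{-j,j}$, the operators $\s=\sum_j\ad(S_j)$ and $\tilde\s=\sum_j\h_{S_j}$ (and hence $\Delta$) decompose into pieces $\s_j,\tilde\s_j$ of bidegree $(-j,j+1)$, and the correct equation at level $k$ is the convolution
\[
[d,\alpha_k]+\sum_{i=0}^{k-1}\bigl(\tilde\s_i\,\alpha_{k-1-i}-\alpha_{k-1-i}\,\s_i\bigr)=0,
\]
not a two-term recursion. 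Accordingly the paper takes $\alpha_1=C(S_0,-)$ (with $S_0$, not the full $S$, and positive sign), and then defines the higher $\alpha_k$ by applying $\H^*$ to this entire convolution sum.

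Second, and more seriously, your closedness check invokes $[d+\s,\Delta]+\Delta^2=0$, which is equivalent to $(d+\tilde\s)^2=0$. But $\tilde\s^2=0$ is exactly what the paper derives as a \emph{corollary} of the theorem; using it inside the proof is circular. (Even granting it, your computation $[d+\s,\Delta\alpha_{k-1}]=[d+\s,\Delta]\alpha_{k-1}+\Delta[d+\s,\alpha_{k-1}]=-\Delta^2\alpha_{k-1}-\Delta^2\alpha_{k-2}$ does not cancel.) The paper sidesteps this entirely: it writes $\alpha$ as the fixed point of
\[
\alpha=1+\alpha_1-\H^*\bigl(\tilde\s\alpha-\alpha\s-\tilde\s_0+\s_0\bigr),
\]
sets $\beta=[d,\alpha]+\tilde\s\alpha-\alpha\s$, and shows by direct manipulation (using only $[d,\tilde\s]=0$, which holds because $\tilde\s$ is evolutionary, and $(d+\s)^2=0$ from the master equation) that $\beta=-\H^*(\tilde\s\beta-\beta\s)$. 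Since $\H^*$ strictly raises filtration, induction forces $\beta=0$. This fixed-point manoeuvre is the idea missing from your proposal: it replaces the step-by-step obstruction check by a single contractive equation for the total error, and never needs to know that $d+\tilde\s$ squares to zero.
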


Note that $\alpha$ is actually a finite sum, since the operators
$\alpha_k$ vanish for $k>n+1$.

The following corollary also follows from results contained in
\cite{Olver}.
\begin{corollary}
  The operator $d+\tilde{\s}$ is a differential of the sheaf of graded
  superspaces $\FF$, and there is a natural isomorphism
  $H^*(\FF,d+\s)\cong H^*(\FF,d+\tilde{\s})$. 
\end{corollary}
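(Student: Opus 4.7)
The plan is to deduce both claims directly from Theorem \ref{main}, treating the corollary as a formal consequence of the existence of the intertwiner $\alpha$. The only external input needed is $(d+\s)^2=0$, which was recorded just before the corollary as the identity $[d,\s]+\s^2=0$; this follows from the classical master equation \eqref{master} together with the graded Jacobi identity for the Soloviev bracket.

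First I check that $d+\tilde\s$ squares to zero. Applying $d+\tilde\s$ on the left of the intertwining identity $(d+\tilde\s)\alpha=\alpha(d+\s)$ yields
\begin{equation*}
  (d+\tilde\s)^2\alpha = (d+\tilde\s)\alpha(d+\s) = \alpha(d+\s)^2 = 0.
\end{equation*}
The operator $\alpha=1+\sum_{k\ge1}\alpha_k$ is invertible: each $\alpha_k$ strictly lowers the form-degree $p$, and $\FF^{-p,*}$ vanishes for $p>n+1$, so $\alpha-1$ is pointwise nilpotent on $\FF$ and the Neumann series $\alpha^{-1}=\sum_{j\ge0}(1-\alpha)^j$ terminates on any given element. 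Cancelling $\alpha$ on the right in the displayed identity gives $(d+\tilde\s)^2=0$, so $d+\tilde\s$ is indeed a differential on the sheaf $\FF$ of graded superspaces.

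For the isomorphism on cohomology, the intertwining identity says precisely that $\alpha\colon(\FF,d+\s)\to(\FF,d+\tilde\s)$ is a morphism of complexes of sheaves. Being an automorphism of the underlying graded superspace, it is an isomorphism of complexes, and therefore induces an isomorphism $H^*(\FF,d+\s)\cong H^*(\FF,d+\tilde\s)$; naturality with respect to the star-shaped open set $U$ is inherited from the sheaf-theoretic character of the construction of $\alpha$ in Theorem \ref{main}. The corollary is thus a formal consequence of Theorem \ref{main}, and there is no substantive obstacle beyond that theorem---all the real work lies in constructing $\alpha$ itself, which has already been done.
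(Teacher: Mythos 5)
Your proof is correct and follows essentially the route the paper intends: the corollary is a formal consequence of Theorem \ref{main}, since $\alpha=1+\sum_k\alpha_k$ is invertible (the $\alpha_k$ shift the form degree and vanish for $k>n+1$, so $\alpha-1$ is nilpotent), and conjugating $(d+\s)^2=0$ by $\alpha$ gives both $(d+\tilde\s)^2=0$ and the isomorphism on cohomology. The only quibble is cosmetic: in the paper's indexing $\FF^{-p,q}$, the operators $\alpha_k$ \emph{raise} $p$ toward the bound $n+1$ rather than lower it, but this does not affect the nilpotence argument.
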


Since $\tilde{\s}$ is evolutionary, we have $[d,\tilde{\s}]=0$, and
hence the above corollary actually establishes that
$\tilde{\s}^2=0$. Another proof of this result is given in
\cite{covariant} for $n=1$.

\begin{proof}[Proof of Theorem \ref{main}]
  Write
  \begin{align*}
    S_n&= \sum_{|I|=k} \sigma_IS_I , &
    \tilde{\s}{}_k&= \sum_{|I|=k} \h_{\sigma_IS_I} , &
    \s_n&= \sum_{|I|=k} \ad(\sigma_IS_I) .
  \end{align*}
  The operator $\alpha_1$ must satisfy the equation
  \begin{equation*}
    [d,\alpha_1 ] + \tilde{\s}{}_0 - \s_0 = 0 .
  \end{equation*}
  By Lemma \ref{homotopy}, we may choose $\alpha_1f=C(S_0,f)$.

  The operators $\alpha_n$, $n>1$, satisfy the recursion
  \begin{equation*}
    [d,\alpha_k ] + \sum_{i=0}^{k-2} \bigl( \tilde{\s}{}_i\alpha_{k-i-1} -
    \alpha_{k-i-1}\s_i \bigr) + \tilde{\s}{}_{k-1} - \s_{k-1} = 0 .
  \end{equation*}
  We make the Ansatz
  \begin{equation}
    \label{recurse}
    \alpha_k = \H^* \Bigl( \sum_{i=0}^{k-1} \bigl( \tilde{\s}{}_k\alpha_{k-i-1} -
    \alpha_{k-i-1}\s_i \bigr) + \tilde{\s}{}_{k-1} - \s_{k-1} \Bigr)
  \end{equation}
  The important point is that in this formula, $\H^*$ is only applied
  to elements of $F^1\FF$, where the algebraic Poincar\'e lemma yields
  a contracting homotopy of the differential $d$.

  To establish the theorem, we reformulate the recursion
  \eqref{recurse} as the solution of a fixed-point equation. We have
  \begin{equation*}
    \alpha = 1 + \alpha_1 - \H^* \bigr( \tilde{\s}A - A\s - \tilde{\s}{}_0 +
    \s_0 \bigr) .
  \end{equation*}
  It is easily seen that this reformulates the recursive definition of
  the operators $\alpha_k$.

  Let $\beta=[d,\alpha]+\tilde{\s}\alpha-\alpha\s$. Since $\H^*$ is being applied to
  elements of $F^1\FF$ in this expression, we may set $[d,\H^*]$ equal
  to $1$ in calculating $[d,\alpha]$. A straightforward calculation now
  shows that
  \begin{align*}
    [d,\alpha] &= [d,\alpha_1] - [ d,\H^* ] \bigr( \tilde{\s}\alpha - \alpha\s -
    \tilde{\s}{}_0 + \s_0 \bigr) \\
    &\quad + \H^* \bigr( [d,\tilde{\s}]\alpha - \tilde{\s}[d,\alpha] - [d,\alpha]\s - \alpha
    [d,\s] \bigr) \\
    &= \s_0 - \tilde{\s}{}_0 - \bigr( \tilde{\s}\alpha - \alpha\s -
    \tilde{\s}{}_0 + \s_0 \bigr) + \H^* \bigr( - \tilde{\s}^2\alpha -
    \tilde{\s}[d,\alpha] - [d,\alpha]\s + \alpha \s^2 \bigr) \\
    &= - \bigr( \tilde{\s}\alpha - \alpha\s \bigr) + \H^* \bigr( - \tilde{\s}^2\alpha -
    \tilde{\s}[d,\alpha] - [d,\alpha]\s + \alpha \s^2 \bigr) ,
  \end{align*}
  in other words,
  \begin{equation*}
    \beta = - \H^* \bigl( \tilde{\s}\beta - \beta\s \bigr) .
  \end{equation*}
  We see, by induction on $p$, that $\beta$ maps $\FF$ to $F^p\FF$. Since
  $F^p\FF$ vanishes if $p>n+1$, we conclude that $\beta=0$, proving the
  theorem.
\end{proof}

The resolution $(\FF,d+\tilde\s)$ of the complex $(\CF,\s)$ has the
disadvantage of not being a differential graded Lie superalgebra in
any natural way. It does present some technical advantages over the
sheaf of differential graded Lie superalgebras $(\FF,d+\tilde\s)$,
since it is the total complex of a bicomplex with differentials $d$
and $\tilde{\s}$, and there are many methods available to study the
cohomology of the evolutionary vector field $\tilde{\s}$.

\section{Covariant field theories}

Introduce formal variables $\{u^1,\dots,u^n\}$ of ghost number
$\gh(u^i)=2$, corresponding to the coefficients of a vector field
$u^i\p_i$ on $\R^n$. Define the densities
\begin{equation*}
  D_i = \< \p_i x^a , \xi_a \> , \quad i \in \{1,\dots,n\} ,
\end{equation*}
and let $D_u=u^iD_i$.

A \textbf{covariant} classical field theory is a solution
$\tint S_u\in\CF_\circ\[u\]$, of total ghost number $0$ and even parity, to
the covariant extension of the classical master equation
\begin{equation*}
  \half \( \tint S_u , \tint S_u \) = \tint D_u .
\end{equation*}
In particular, setting the variables $u^i$ to zero, we recover a
solution of the classical master equation.

\begin{lemma}
  \label{Di}
  The operator
  \begin{equation*}
    \iota_i = \sigma_i \Bigl( 1 - \sum_{\alpha\in\N^n} \xi_{a,\alpha} \p_{a,\alpha} \Bigr)
  \end{equation*}
  satisfies the equations $\iota_iD_j=0$ and
  \begin{equation*}
    d \iota_i + \iota_i d + \ad(D_i) = 0 .
  \end{equation*}
\end{lemma}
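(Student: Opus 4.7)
The two parts of the lemma follow by direct computation from the definitions.

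\emph{First identity.} Since $D_j = \sum_a x^a_j\xi_a$ is linear in the antifield jet variable $\xi_a$, the Euler-type operator $E = \sum_\alpha\xi_{a,\alpha}\p^a_\alpha$ appearing in $\iota_i$ acts on $D_j$ as the identity, so $(1 - E)D_j = 0$ and hence $\iota_iD_j = \sigma_i(1-E)D_j = 0$.

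\emph{Cartan identity.} I would evaluate both sides on an arbitrary element $f=\sigma_J f_J\in\FF$ and match them. Two preliminary facts drive the argument on the left: (i) the anti-derivation formula $d(\sigma_ig)=\p_ig-\sigma_idg$, immediate from the definition of $d$; and (ii) the commutation $[d,E]=0$, which, via $d=\sum_k\sigma_k\p_k$, reduces to $[\p_k,E]=0$. The latter holds because the two contributions $[\p_k,\xi_{a,\alpha}]\p^a_\alpha = \xi_{a,\alpha+k}\p^a_\alpha$ and $\xi_{a,\alpha}[\p_k,\p^a_\alpha] = -\xi_{a,\alpha}\p^a_{\alpha-k}$ (valid for $\alpha_k\ge 1$) cancel after reindexing $\alpha\mapsto\alpha+k$. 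Combining (i) and (ii), a short manipulation gives
\begin{equation*}
d\iota_if + \iota_idf = \sigma_J(1-E)\p_i f_J.
\end{equation*}

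For the right-hand side, I would expand $\ad(D_i)f=\(D_i,f\)$ directly from the Soloviev bracket formula. Since $\p_{a,\alpha}D_i = \delta_{\alpha,i}\xi_a$ and $\p^a_\alpha D_i = \delta_{\alpha,0}x^a_i$, only two terms survive the outer sum, yielding
\begin{equation*}
\(D_i,f\) = \sigma_J\Bigl(\sum_{a,\beta}\xi_{a,\beta}\p_i\p^a_\beta f_J - \sum_{a,\beta}x^a_{\beta+i}\p_{a,\beta}f_J\Bigr),
\end{equation*}
the relative sign coming from the factor $(-1)^{\pa(D_i)} = -1$, and the outer parity prefactor $(-1)^{(\pa(D_i)+1)(\pa(x_a)+|J|)}$ being $+1$ since $\pa(D_i)$ is odd. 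Substituting $\p_i = \sum_{a,\beta}(x^a_{\beta+i}\p_{a,\beta} + \xi_{a,\beta+i}\p^a_\beta)$ into the second sum and applying $[\p_i,E]=0$ established above, the expression telescopes to $-\sigma_J(1-E)\p_if_J$, which exactly cancels the left-hand side computed in the previous paragraph.

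The main obstacle is bookkeeping the graded signs and managing the infinite sums over multi-indices; once the commutators $[\p_k,E]=0$ and $[\p_k,\p^a_\alpha]=-[\alpha_k\ge 1]\p^a_{\alpha-k}$ are in hand, the argument is mechanical.
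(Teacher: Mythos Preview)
Your argument is correct and takes a genuinely different route from the paper's. A few remarks.

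\textbf{On the operator $E$.} You write $E=\sum_\alpha\xi_{a,\alpha}\p^a_\alpha$, while the lemma as printed has $\p_{a,\alpha}$. Your version is the right one: the printed subscript is a typo, as one sees by computing $C(D_i,-)$ directly from the formula in Lemma~\ref{homotopy} (only the term with $\p_{a,\alpha}D_i=\delta_{\alpha,e_i}\xi_a$ survives, and it feeds into $\p^a_\beta g$). With the printed $\p_{a,\alpha}$, the first identity $\iota_iD_j=0$ would already fail.

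\textbf{A minor slip.} Your shorthand ``$d=\sum_k\sigma_k\p_k$'' is not literally correct in the paper's conventions, since $d$ \emph{removes} a $\sigma$ rather than inserting one. This does not damage the argument: all you actually use is that $E$ commutes with each $\p_k$ and does not touch the exterior factor, which you verify, so $[d,E]=0$ follows.

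\textbf{Comparison with the paper.} The paper does not expand the Soloviev bracket by hand. Instead it recognises the second identity as the special case $f=D_i$ of Lemma~\ref{homotopy}, the general homotopy formula
\[
dC(f,g) = \h_f g - \(f,g\) + C(df,g) + (-1)^{\pa(f)}C(f,dg).
\]
One checks separately that $\h_{D_i}=-\p_i$ (on $\FF_\circ$) and that $C(D_i,-)=\sigma_iE$, whence $\iota_i=\sigma_i-C(D_i,-)$; since $dD_i=0$ and $\pa(D_i)$ is odd, the homotopy formula combined with the Cartan relation $d\sigma_i+\sigma_id=\p_i$ yields the result. Your approach has the virtue of being self-contained and independent of the somewhat heavy general machinery of $C(f,g)$; the paper's approach explains structurally why the particular combination $\sigma_i(1-E)$ is the right one, namely as $\sigma_i$ corrected by the canonical homotopy between $\h_{D_i}$ and $\ad(D_i)$.
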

\begin{proof}
  The equation $\iota_iD_j=0$ is clear by inspection, while the formula
  for $d\iota_i+\iota_id$ is a special case of Lemma \ref{homotopy}, since
  \begin{equation*}
    \h_{D_i} = - \p_i
  \end{equation*}
  and
  \begin{equation*}
    \sum_{\alpha\in\N^n} \xi_{a,\alpha} \p_{a,\alpha} = C(D_i,-) .
    \qedhere
  \end{equation*}
\end{proof}

Define the covariant extension of the differential $d$ by the formula
$d_u = d + \iota_u$, where $\iota_u=u^i\iota_i$. The analogue of Proposition
\ref{lift} holds in this setting.
\begin{proposition}
  Given a covariant classical field theory, represented by a solution
  $\int S_u\in\CF_\circ\[u\]$ of the covariant classical master equation, there
  is a lift $S_u\in\FF_\circ\[u\]$ of $\int S_u$ to a solution of the
  \textbf{curved} Maurer-Cartan equation in $\FF_\circ\[u\]$:
  \begin{equation}
    \label{covariant}
    d_uS_u + \half \( S_u , S_u \) = D_u .
  \end{equation}
\end{proposition}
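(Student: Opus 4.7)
The proof mirrors that of Proposition \ref{lift}. Choose $S_0^u\in\FF_\circ^{0,*}\[u\]$ of total ghost $0$ and even parity lifting $\tint S_u$. Since the covariant classical master equation gives $\tint\bigl(\half\(S_0^u,S_0^u\)-D_u\bigr)=0$, the algebraic Poincar\'e lemma (Theorem \ref{Poincare}) provides $S_1^u\in\FF_\circ^{-1,*}\[u\]$ satisfying $dS_1^u+\half\(S_0^u,S_0^u\)=D_u$. For $k\ge2$, define $S_k^u\in\FF_\circ^{-k,*}\[u\]$ recursively by
\begin{equation*}
  S_k^u = -\H\Bigl( \half\sum_{j+l=k-1}\(S_j^u,S_l^u\) + \iota_u S_{k-2}^u \Bigr),
\end{equation*}
so that $S_u=\sum_k S_k^u$ is the fixed point of
\begin{equation*}
  S_u = S_0^u + S_1^u - \H\Bigl( \half\bigl(\(S_u,S_u\)-\(S_0^u,S_0^u\)\bigr) + \iota_u S_u \Bigr).
\end{equation*}
The argument of $\H$ lies in $F^1\FF_\circ$, and since $\H$ strictly raises filtration while $F^{n+2}\FF_\circ=0$, the iteration converges to a well-defined $S_u\in\FF_\circ\[u\]$.

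Let $T=d_uS_u+\half\(S_u,S_u\)-D_u$. Applying $d$ to the fixed-point equation and using $d\H=1-\H d$ on $F^1\FF_\circ$, together with $dS_0^u=0$ and the defining relation for $S_1^u$, yields $T=\H\,dX$, where $X$ is the argument of $\H$. To expand $dX$, combine: Leibniz for the Soloviev bracket against $d$; Lemma \ref{Di} in the form $d\iota_u S_u=-\iota_u dS_u-\(D_u,S_u\)$; the identities $\iota_u^2=0$ (from $\sigma_i\sigma_j=-\sigma_j\sigma_i$ and symmetry of $u^iu^j$) and $\iota_u D_u=0$ (from $\iota_iD_j=0$); the Jacobi relation $\(\(S_u,S_u\),S_u\)=0$; and the fact that $\iota_u$ acts as an odd graded derivation of the Soloviev bracket. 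Substituting $dS_u=T-\iota_u S_u-\half\(S_u,S_u\)+D_u$, all terms not involving $T$ cancel, leaving $dX=\(T,S_u\)-\iota_u T$, so
\begin{equation*}
  T=\H\(T,S_u\)-\H\iota_u T.
\end{equation*}
Since $\H$, $\iota_u$, and the bracket respect the filtration appropriately, induction forces $T\in F^p\FF_\circ$ for every $p$, whence $T=0$.

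The one new identity beyond those used in the proof of Proposition \ref{lift} is the derivation property of $\iota_u$ on the Soloviev bracket, which is exactly what is needed to cancel $-\(\iota_u S_u,S_u\)+\half\iota_u\(S_u,S_u\)$ in $dX$. This is the expected main obstacle: it must be verified directly by unwinding $\iota_i=\sigma_i\bigl(1-\sum_\alpha\xi_{a,\alpha}\p_{a,\alpha}\bigr)$ against the explicit formula for the Soloviev antibracket, tracking how the counting operator $\sum_\alpha\xi_{a,\alpha}\p_{a,\alpha}$ interacts with the pairings $\p^\beta\p_{a,\alpha}f\cdot\p^\alpha\p^a_\beta g$ defining the bracket.
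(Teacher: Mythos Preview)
Your proof is correct and follows essentially the same fixed-point argument as the paper: both set up the recursion, rewrite it as a fixed-point equation, and show that the defect $T_u$ satisfies $T_u=-\H\bigl(\iota_uT_u+\(S_u,T_u\)\bigr)$, forcing $T_u=0$ by the filtration. Your index $\iota_uS^u_{k-2}$ is in fact the one consistent with the fixed-point equation (the paper writes $\iota_uS_{k-1}$, which appears to be a slip), and you are right to flag the derivation property of $\iota_u$ with respect to the Soloviev bracket as the key extra ingredient---the paper uses it tacitly here and only asserts it later, in the proof of the theorem on $S^\rho_u$, in the form ``$\Gamma_i$ is a graded derivation of the Soloviev bracket''.
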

\begin{proof}
  We expand $S_u$ as the sum
  \begin{equation*}
    S_u = \sum_{k=0}^{n+1} S_k ,
  \end{equation*}
  where $S_k\in\FF_\circ^{-k,*}\[u\]$ has total degree $0$. Since
  $\half(\tint S_0,\tint S_0)-\tint D_u=0\in\CF_\circ\[u\]$, we may find an
  element $S_1\in\FF_\circ^{-1,*}\[u\]$ of total degree $0$ such that
  \begin{equation*}
    dS_1+\half\(S_0,S_0\)=D_u .
  \end{equation*}
  We now define $S_k\in\FF_\circ^{-k,*}\[u\]$ for $k>1$ by the recursive
  formula
  \begin{equation*}
    S_k = - \H \Bigl( \iota_uS_{k-1} + \half \sum_{j=0}^{k-1} \( S_j ,
    S_{k-j-1} \) \Bigr) .
  \end{equation*}
  It is vital to the proof that the argument of $\H$ lies in
  $F^1\FF_\circ$, where $[d,\H]=1$.

  Equivalently, $S$ is the fixed point of the equation
  \begin{equation*}
    S_u = S_0 + S_1 - \H \Bigl( \iota_uS_u+ \half \(S_u,S_u\) -
    \half \(S_0,S_0\) \Bigr) .
  \end{equation*}
  Since $dS_0=0$, we see that
  \begin{align*}
    dS_u &= dS_1 - \Bigl( \iota_uS_u + \half \(S_u,S_u\) - \half
    \(S_0,S_0\) \Bigr) + \H d \Bigl( \iota_uS_u+ \half \(S_u,S_u\) -
    \half \(S_0,S_0\) \Bigr) \\
    &= dS_1 + \half \(S_0,S_0\)  - \iota_uS_u - \half \(S_u,S_u\) + \H
    \Bigl( d \iota_uS_u - \(S_u,dS_u\) \Bigr) .
  \end{align*}
  Let $T_u=d_uS_u+\half\(S_u,S_u\)-D_u$. We find that
  \begin{equation*}
    T_u = - \H \Bigl( \iota_uT_u + \(S_u,T_u\) \Bigr) .
  \end{equation*}
  Here, we have used the Jacobi relation for the Soloviev bracket, in
  the form $\(S_u,\(S_u,S_u\)\)=0$. Since
  $\H:F^p\FF_\circ\to F^{p+1}\FF_\circ$ raises filtration degree in
  $\FF_\circ$, we see, by induction on $p$, that $T_u$ lies in
  $F^p\FF_\circ\[u\]$. Since $F^p\FF_\circ$ vanishes if $p>n+1$, we conclude that
  $T_u=0$, proving the theorem.
\end{proof}

\begin{remark}
  Just as the usual Maurer-Cartan equation is an abstracted form of
  the equation for a flat connection, the curved Maurer-Cartan
  equation may be thought of as an abstraction for the equation for a
  projectively flat connection.
\end{remark}

Introduce the operator $\s_u=\ad(S_u)$. It follows from Lemma \ref{Di}
that if $S_u$ satisfies the covariant classical master equation, then
$d_u+\s_u$ is a differential on $\FF\[u\]$. We now restrict attention
to the special case where $S_u$ is an affine function of the variables
$u^i$:
\begin{equation}
  \label{SG}
  S_u = S + u^i G_i .
\end{equation}
Note that $G_i\in\FF_\circ$ has ghost number $-2$ and even parity. The
equation \eqref{covariant} is equivalent to the classical master
equation \eqref{master} for $S$ and the two system of equations
\begin{equation*}
  dG_i + \iota_iS + \(S,G_i\) = D_i \, \quad 1\le i\le n ,
\end{equation*}
and
\begin{equation*}
  \iota_iG_j + \iota_jG_i + \(G_i,G_j\) = 0 , \quad 1\le i,j\le n .
\end{equation*}
Introduce the (degree $-1$, odd parity) operators
\begin{equation*}
  \Gamma_i = \iota_i + \ad(G_i)
\end{equation*}
on $\FF$. These pairwise graded commute (i.e.\ anticommute) with each
other and with the differential $d+\s$, and we have
\begin{equation*}
  d_u + \s_u = d + \s + u^i\Gamma_i .
\end{equation*}

The following theorem gives a purely algebraic transcription of the
construction of the Lagrangian density of AKSZ models in the framework
of covariant field theories. Let $\Gamma=\Gamma_1\dots\Gamma_n$.
\begin{theorem}
  Let $\rho\in\FF^{0,n}_\circ$ be a cocycle for the differential $\s$, such
  that $\(\rho,\Gamma\rho\)=0$. Then
  \begin{equation*}
    S^\rho_u = S_u + \Gamma\rho
  \end{equation*}
  is a solution of the covariant classical master equation
  \eqref{covariant}.
\end{theorem}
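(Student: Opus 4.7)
The plan is to show that $T_u := d_u S^\rho_u + \half\(S^\rho_u, S^\rho_u\) - D_u$ vanishes. Expanding bilinearly and invoking the covariant master equation for $S_u$ to cancel $D_u$, one is left with
\begin{equation*}
T_u = (d_u + \s_u)\Gamma\rho + \half\(\Gamma\rho, \Gamma\rho\).
\end{equation*}

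For the first summand, write $d_u + \s_u = d + \s + u^i \Gamma_i$ and treat the two pieces separately. Since $d + \s$ is odd and graded-commutes (i.e.\ anticommutes) with each $\Gamma_i$, it slides through $\Gamma = \Gamma_1 \cdots \Gamma_n$ with a sign $(-1)^n$, giving $(d+\s)\Gamma\rho = (-1)^n \Gamma(d+\s)\rho = 0$: both $d\rho = 0$ (the bicomplex has nothing above $\FF^{0,*}$) and $\s\rho = 0$ by hypothesis. For the cross terms, the anticommutation of the $\Gamma_j$ together with $\Gamma_i^2 = 0$ (both extracted from the fact that $d_u + \s_u$ squares to zero) give $\Gamma_i \Gamma = 0$, so $u^i \Gamma_i \Gamma\rho = 0$ as well.

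For the second summand I plan to use that each $\Gamma_i$ is a graded derivation of the Soloviev bracket. Applying $\Gamma_n$ to the hypothesis $\(\rho, \Gamma\rho\) = 0$ yields $\(\Gamma_n \rho, \Gamma\rho\) \pm \(\rho, \Gamma_n \Gamma\rho\) = 0$; the second term vanishes because $\Gamma_n \Gamma = 0$, leaving $\(\Gamma_n \rho, \Gamma\rho\) = 0$. Iterating with $\Gamma_{n-1}, \ldots, \Gamma_1$ and discarding the second term each time for the same reason, we arrive at $\(\Gamma_1 \Gamma_2 \cdots \Gamma_n \rho, \Gamma\rho\) = \(\Gamma\rho, \Gamma\rho\) = 0$.

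The main obstacle is justifying that $\Gamma_i = \iota_i + \ad(G_i)$ is a graded derivation of the Soloviev bracket. While $\ad(G_i)$ is automatic by graded Jacobi, the derivation property of $\iota_i$ must be established separately, either by a direct inspection of the Soloviev bracket formula (tracking the $\sigma_i$ multiplication and the $(1 - \sum_{a,\alpha} \xi_{a,\alpha}\p_{a,\alpha})$ factor with careful sign bookkeeping) or by the indirect route of noting that $d_u + \s_u$ is necessarily a derivation of the bracket on $\FF\[u\]$ (as the differential of a dg Lie algebra), then subtracting off the known derivations $d$ and $\s$ and extracting each $u^i$-coefficient. Once this is in hand, the rest of the argument is formal.
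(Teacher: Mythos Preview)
Your proposal is correct and follows essentially the same approach as the paper: split into $(d_u+\s_u)\Gamma\rho$ and $\(\Gamma\rho,\Gamma\rho\)$, kill the first via the anticommutation relations and $\Gamma_i\Gamma=0$, and kill the second by iteratively applying the derivation property of $\Gamma_i$ to $\(\rho,\Gamma\rho\)=0$ (the paper compresses your iteration into the single line $\Gamma\(\rho,\Gamma\rho\)=\(\Gamma\rho,\Gamma\rho\)$). The paper likewise simply asserts that $\Gamma_i$ is a graded derivation of the Soloviev bracket, so your flagging this as the point needing verification is apt; note, however, that your ``indirect route'' is slightly circular as stated (that $d_u+\s_u$ is the differential of a dg Lie algebra is exactly what needs the derivation property of $\iota_i$), so the direct check is the honest one.
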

\begin{proof}
  By the equations $\s\rho=(d+\s)\rho=0$ and $[d+\s,\Gamma_i]=0$, it follows that
  \begin{equation*}
    (d_u+\s_u)\Gamma\rho = u^i\Gamma_i(\Gamma\rho) .
  \end{equation*}
  By the equation $\Gamma^i\Gamma^j+\Gamma^j\Gamma^i=0$, we see that
  $\Gamma_i(\Gamma\rho)=\Gamma_i(\Gamma_1\dots_n\rho)=0$, and hence that
  \begin{equation*}
    (d_u+\s_u)\Gamma\rho = 0 .
  \end{equation*}
  
  Since $\Gamma_i$ is a graded derivation of the Soloviev bracket and
  $\Gamma_i(\Gamma\rho)=0$, we see that
  \begin{equation*}
    \Gamma \( \rho , \Gamma\rho \) = \( \Gamma\rho , \Gamma\rho \) ,
  \end{equation*}
  showing that
  \begin{equation*}
    \( \Gamma\rho , \Gamma\rho \) = 0 ,
  \end{equation*}
  and completing the proof.
\end{proof}

Since the induced action of $\Gamma_i$ on $\CF$ equals
$\ad(\int G_i)$, we obtain the following corollary.

\begin{corollary}
  Let $\int\rho\in\CF^n_\circ$ be a cocycle for the differential $\s$, of ghost
  number and parity $n$, such that
  $\bigl(\bigl(\int G_1,\dots,\bigl(\int G_n,\int \rho \bigr)\dots\bigr),
  \int\rho\bigr)=0$. Then
  \begin{equation*}
    \tint S^\rho = \tint S + \bigl( \tint G_1,\dots,\bigl(\tint
    G_n,\tint \rho \bigr)\dots\bigr)
    \in \CF_\circ
  \end{equation*}
  is a solution of the classical master equation \eqref{Master}.
\end{corollary}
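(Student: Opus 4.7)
The strategy is to derive the corollary from the preceding theorem by projecting along the integration map $\int:\FF_\circ\to\CF_\circ$. First I argue that the corollary's cohomological hypotheses in $\CF_\circ$ can be lifted to the strict hypotheses of the theorem: namely, that there exists a representative $\rho\in\FF^{0,n}_\circ$ of the class $\int\rho$ satisfying both $\s\rho=0$ and $\(\rho,\Gamma\rho\)=0$. Since $d\rho=0$ automatically on $\FF^{0,n}_\circ$ (top form degree) and the corollary's hypotheses imply $\s\rho,\ \(\rho,\Gamma\rho\)\in d\FF_\circ^{-1,*}$, one absorbs these obstructions into lower form-degree corrections using the homotopy $\H$ from Theorem~\ref{Poincare}.

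Given such a $\rho$, the preceding theorem yields $S^\rho_u=S_u+\Gamma\rho$ satisfying the covariant classical master equation in $\FF_\circ\[u\]$. Setting $u^i=0$ gives the strict classical master equation $dS^\rho+\half\(S^\rho,S^\rho\)=0$ in $\FF_\circ$, where $S^\rho=S+\Gamma\rho$. By the earlier proposition that $\int$ is a morphism of differential graded Lie superalgebras, together with $\int d=0$, this projects to $\half(\tint S^\rho,\tint S^\rho)=0$ in $\CF_\circ$.

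It remains to identify $\int\Gamma\rho$ with the iterated bracket. Expanding $\Gamma_1\cdots\Gamma_n=\prod_i(\iota_i+\ad(G_i))$ as $2^n$ monomials, note that each factor $\iota_i$ sends $\FF^{-p,*}_\circ$ into $\FF^{-p-1,*}_\circ$, strictly lowering form degree, while $\ad(G_i)$ preserves it. Since only the $\FF^{0,*}_\circ$-component contributes to $\int$, and $\rho$ starts in $\FF^{0,n}_\circ$, the only surviving monomial is the all-$\ad$ term $\ad(G_1)\cdots\ad(G_n)\rho$. Using $\int\ad(G_i)f=\ad(\tint G_i)\tint f$ iteratively yields
\begin{equation*}
\tint\Gamma\rho=\bigl(\tint G_1,\dots,\bigl(\tint G_n,\tint\rho\bigr)\dots\bigr),
\end{equation*}
completing the identification.

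The main obstacle is the lifting step in the first paragraph. The cohomological conditions on $\int\rho$ only control $\s\rho$ and $\(\rho,\Gamma\rho\)$ modulo $d$-exact terms, whereas the theorem requires these to vanish exactly. Constructing the necessary corrections---which must live in lower form degrees so as not to alter the cohomology class $\int\rho$---requires iteratively applying the homotopy $\H$ and verifying that the two obstructions can be killed simultaneously; this is the delicate technical point of the argument.
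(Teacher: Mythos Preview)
Your identification of $\int\Gamma\rho$ with the iterated bracket is correct and is exactly the paper's justification: the observation that each $\iota_i$ strictly lowers form degree while $\ad(G_i)$ preserves it is precisely why ``the induced action of $\Gamma_i$ on $\CF$ equals $\ad(\tint G_i)$,'' which is the only sentence the paper offers before stating the corollary.

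Where you diverge from the paper is in the overall architecture, and that is where the gap lies. You attempt to lift the hypotheses from $\CF$ to the strict hypotheses of the theorem in $\FF$, apply the theorem there, and project back. But the lifting you describe is incoherent as written: you ask for a representative $\rho\in\FF^{0,n}_\circ$ while also proposing to add ``corrections in lower form degrees,'' which would take $\rho$ out of $\FF^{0,n}_\circ$. If you stay in $\FF^{0,n}_\circ$, the only freedom is adding $d$-exact terms, and this cannot touch the lower form-degree components $\(S_k,\rho\)$, $k\ge1$, of $\s\rho$, which are not controlled by the hypothesis $\s\tint\rho=0$ at all. The quadratic condition $\(\rho,\Gamma\rho\)=0$ is worse still. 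You rightly flag this step as delicate, but you do not resolve it, and there is no evident reason it can be resolved.

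The paper sidesteps the whole issue by never lifting. The proof of the theorem uses only formal ingredients: the bracket, the differential, and the fact that the $\Gamma_i$ are pairwise anticommuting graded derivations that anticommute with the differential. Each of these descends to $\CF$---the Soloviev bracket to the antibracket, $d+\s$ to $\s$, and $\Gamma_i$ to $\ad(\tint G_i)$ by your own form-degree observation. Running the same three-line computation verbatim in $\CF$ proves the corollary directly from its stated hypotheses, with no lifting required.
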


The remainder of this section is not used elsewhere in the article,
but serves to relate the above formulas to the approach of
\cite{covariant}, which handled the special case where $n=1$. In loc.\
cit., we expressed the twist of a solution of the covariant classical
master equation by a cochain $\rho$ as a gauge transformation.

If $W\in\FF_\circ[u]$ has total ghost number $-1$, it generates a gauge
transformation of solutions of the covariant classical master
equation:
\begin{equation*}
  S_u \bullet W = S_u + \sum_{k=0}^\infty \frac{1}{(k+1)!} (-\ad(W))^k (d_u+\s_u) W .
\end{equation*}
If $W$ satisfies the auxilliary equation $\( W , (d_u+\s_u)W \) = 0$,
then $S_u\bullet\rho=S_u+(d_u+\s_u)W$.

More generally, this formula still applies if $W\in\FF_\circ(u)$ is a
rational function of the variables $u^i$ such that
$(d_u+\s_u)W\in\FF_\circ[u]$ is a \textbf{polynomial} function of the
variables $u^i$ and $\(W,(d_u+\s_u)W\)=0$. This way of constructing
solutions of the Maurer-Cartan equation is a nonlinear analogue of the
idea that a coboundary in a complex is a cocycle in any subcomplex
that contains it. We may express the solution $S_u+ \Gamma\rho$ of the
covariant classical master equation as a singular gauge transformation
of $S_u$.

Make the Ans\"atz
\begin{equation*}
  W = \frac{1}{n!(u\cdot u)} \epsilon_{i_1}{}^{i_2\dots i_n} u^{i_1}
    \Gamma_{i_2} \dots \Gamma_{i_n} \rho ,
\end{equation*}
where $u\cdot u=\delta_{ij}u^iu^j$. It follows that
$(d_u+\s_u)W = \Gamma\rho$, and that
\begin{align*}
  \(W,(d_u+\s_u)W\) &= \frac{1}{n!(u\cdot u)} \epsilon_{i_1}{}^{i_2\dots i_n}
  u^{i_1} \( \Gamma_{i_2} \dots \Gamma_{i_n} \rho , \Gamma\rho \) \\
  &= \frac{1}{n!(u\cdot u)} \epsilon_{i_1}{}^{i_2\dots i_n} u^{i_1} \Gamma_{i_2} \dots
  \Gamma_{i_n} \( \rho , \Gamma\rho \) = 0 .
\end{align*}
We see that $S_u^\rho = S_u \bullet W$ is obtained from $S_u$ by a singular
gauge transformation of Maurer-Cartan elements in the curved Lie
algebra $(\FF_\circ[u],d_u+\s_u)$.

\section{The covariant abelian Chern-Simons action}

In the remainder of this paper, we consider the special case of
classical Chern-Simons theory. This covariant field theory is quite
special: it is a first-order field theory in which every physical
field is bosonic. In this section, we solve the covariant classical
master equation for abelian Chern-Simons theory, and calculate its
Batalin-Vilkovisky cohomology. We build upon these results in the next
section, where we solve the covariant classical master equation for
non-abelian classical Chern-Simons theory.

The lift of the action for the abelian Chern-Simons field theory to
$\FF_\circ$ is
\begin{equation*}
  \Pi = \half \epsilon^{ijk} \< A_i , \p_jA_k \> + \half \< \p_i c , A^{+i}
  \> - \half \< c , \p_i A^{+i} \> \in \FF_\circ^{0,0}
\end{equation*}
It is clear that $d\Pi=0$, and that $\iota_i\Pi=0$ for $1\le i\le n$. The
classical master equation follows on verification that
$\half\(\Pi,\Pi\)=0$, which is established by a straightforward
calculation. Let $\pi=\ad(\Pi)$, and $\tilde\pi=\h_\Pi$.

Define evolutionary vector fields
\begin{equation*}
  \f^i = \frac12 \biggl( - \Bigl\< c , \frac{\p~}{\p A_i} \Bigr\> +
  \epsilon^{ijk} \Bigl\< A_j , \frac{\p~}{\p A^{+k}} \Bigr\> + \Bigl\< A^{+i}
  , \frac{\p }{\p c^+} \Bigr\> \biggr) , \quad 1\le i\le 3 .
\end{equation*}
(These vector fields are not Hamiltonian.) It is easily seen that
$[\pi,\f^i]=[\f^i,\f^j]=0$ for $1\le i,j\le 3$. The following result shows
that for the abelian Chern-Simons theory, there is a particularly
simple choice for the automorphism $\alpha$ of $\FF$ introduced in
Theorem~\ref{main}.
\begin{proposition}
  \label{alpha}
  $\exp\bigl( \sigma_i \f^i \bigr) (d+\tilde\pi) = (d+\pi) \exp\bigl( \sigma_i\f^i
  \bigr)$
\end{proposition}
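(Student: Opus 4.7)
The plan is to prove the equivalent identity $e^{-X}(d+\pi)e^X = d+\tilde\pi$, where $X = \sigma_i\f^i$. This hinges on three inputs: (a) the homotopy identity $[d,X] = \tilde\pi - \pi$; (b) $[\pi,X] = 0$; and (c) $[\tilde\pi,X] = 0$. As a preliminary, each summand of $\f^i$ pairs a field with a partial derivative in a variable of opposite parity, so $\f^i$ is odd; together with the odd $\sigma_i$, this makes $X$ even. Moreover $X$ shifts $\FF^{-p,q}$ into $\FF^{-(p+1),q+1}$, so $e^X$ is in fact a polynomial in $X$ of degree at most $n+1$.

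Granting (a)--(c), the argument is formal: (b) and (c) imply $\tilde\pi - \pi = [d,X]$ commutes with $X$, so the standard manipulation
\begin{equation*}
  [d, e^X] = \sum_{k\ge 1}\frac{1}{(k-1)!}\,X^{k-1}[d,X] = e^X(\tilde\pi - \pi)
\end{equation*}
goes through; combined with $[\pi,e^X]=0$ this yields $(d+\pi)e^X - e^X(d+\tilde\pi) = [d,e^X] + e^X\pi - e^X\tilde\pi = 0$.

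The two commutation relations are straightforward. Part (b) follows from $[\pi,\f^i]=0$ noted just before the proposition, and for (c) I would verify $[\tilde\pi,\f^i]=0$ on each of the four generators, using the concrete actions $\tilde\pi\colon A_i\mapsto\p_ic,\ A^{+i}\mapsto\epsilon^{ijk}\p_jA_k,\ c^+\mapsto -\p_iA^{+i},\ c\mapsto 0$ and $\f^i\colon A_i\mapsto -\tfrac12 c,\ A^{+k}\mapsto \tfrac12\epsilon^{ijk}A_j,\ c^+\mapsto \tfrac12 A^{+i},\ c\mapsto 0$; each of the four checks is a brief manipulation of the Levi-Civita symbol.

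The principal obstacle is the homotopy identity (a). I would obtain it from Lemma~\ref{homotopy} specialized to $f=\Pi$: since $d\Pi=0$ and $\pa(\Pi)=0$, that lemma reduces to $[d, C(\Pi,-)] = \h_\Pi - \ad(\Pi) = \tilde\pi - \pi$, so it suffices to identify $C(\Pi,-)$ with $\sigma_i\f^i$ as operators on $\FF$. Because $\Pi$ is quadratic with a single derivative, the sums over $\alpha,\beta,\gamma$ in the defining formula for $C(\Pi,g)$ collapse to very few nonzero terms (essentially $|\gamma|=1$ with the remaining multi-indices at their leading values), and a direct match reproduces the three summands of $\sigma_i\f^i$ term-by-term against the derivatives of $\Pi$. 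This identification is the special feature of the abelian Chern-Simons action that lets the explicit $e^X$ replace the recursively constructed $\alpha$ of Theorem~\ref{main}.
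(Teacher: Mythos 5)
Your proposal takes essentially the same route as the paper: its proof likewise observes that $C(\Pi,-)$ coincides (up to sign) with $\sigma_i\f^i$, deduces from Lemma~\ref{homotopy} and $d\Pi=0$ that $\pi-\tilde\pi$ is $\pm[d,\sigma_i\f^i]$, and finishes by noting that this commutes with $\sigma_j\f^j$ — which is exactly the content of your exponentiation step, with your check of $[\tilde\pi,\f^i]=0$ playing the role of the paper's remark $[[d,\sigma_i\f^i],\sigma_j\f^j]=[\pi,\sigma_i\f^i]=0$. The only divergence is the overall sign in the identification (the paper writes $C(\Pi,-)=-\sigma_i\f^i$), a convention-level point with no bearing on the argument.
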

\begin{proof}
  Observe that $C(\Pi,-) = - \sigma_i\f^i$. It follows that
  \begin{align*}
    \pi &= \tilde\pi - [d,C(\Pi,-)] \\
    &= \tilde\pi + \p_i\f^i .
  \end{align*}
  The proof is completed by observing that
  \begin{equation*}
    [[d,\sigma_i\f^i],\sigma_j\f^j] = [\pi,\sigma_i\f^i] = 0 .
    \qedhere
  \end{equation*}
\end{proof}

The Batalin-Vilkovisky cohomology $H^*(\CF,\pi)$ of the abelian
Chern-Simons theory is easily calculated. We must calculate the
cohomology of the complex $(\FF^*,d+\pi)$, which by
Proposition~\ref{alpha} is isomorphic to the cohomology of the double
complex $(\FF^{*,*},d,\tilde\pi)$. We take the cohomology of $\tilde\pi$
first. The vector field $\tilde\pi$ is given by the formula
\begin{equation*}
  \tilde\pi = \pr \biggl( - \Bigl\< \p_ic , \frac{\p~}{\p A_i} \Bigr\>
  + \epsilon^{ijk} \Bigl\< \p_iA_j , \frac{\p~}{\p A^{+k}} \Bigr\>
  + \Bigl\< \p_i A^{+i} , \frac{\p~}{\p c^+} \Bigr\> \biggr) .
\end{equation*}
Following the proof of Barnich et al.\ \cite{BBH}*{}, we filter the
jet coordinates according to the following table.

\begin{equation*}
  \begin{tabu}{|c|l|} \hline
    \mathmakebox[0.2in][c]{0} & \quad \mathmakebox[0.6in][l]{\p^\alpha c^+} \\
    & \quad \mathmakebox[0.6in][l]{\p^\alpha A^{+1}} \alpha_1>0 \\ \hline
    1 & \quad \mathmakebox[0.6in][l]{\p^\alpha A^{+1}} \alpha_1=0 \\
    & \quad \mathmakebox[0.6in][l]{\p^\alpha A^{+2}} \\
    & \quad \mathmakebox[0.6in][l]{\p^\alpha A^{+3}} \\
    & \quad \mathmakebox[0.6in][l]{\p^\alpha A_2} \alpha_1>0 \\
    & \quad \mathmakebox[0.6in][l]{\p^\alpha A_3} \alpha_1+\alpha_2>0 \\ \hline
    2 & \quad \mathmakebox[0.6in][l]{\p^\alpha A_1} \\
    & \quad \mathmakebox[0.6in][l]{\p^\alpha A_2} \alpha_1 = 0 \\
    & \quad \mathmakebox[0.6in][l]{\p^\alpha A_3} \alpha_1+\alpha_2=0 \\
    & \quad \mathmakebox[0.6in][l]{\p^\alpha c}
    \mathmakebox[1.1in][l]{\alpha_1+\alpha_2+\alpha_3>0} \\ \hline
    3 & \quad \mathmakebox[0.6in][l]{c} \\ \hline
  \end{tabu}
\end{equation*}
The leading order term of the differential $\tilde\pi$ in this
filtration is
\small
\begin{multline*}
  \sum_\alpha \biggl( \Bigl\< \p^{\alpha+1} A^{+1} , \frac{\p~}{\p(\p^\alpha c^+)}
  \Bigr\> + \Bigl\< \p^{\alpha+1}A_2 , \frac{\p~}{\p(\p^\alpha A^{+3})} \Bigr\>
  - \Bigl\< \p^{\alpha+1}A_3 , \frac{\p~}{\p(\p^\alpha A^{+2})} \Bigr\> -
  \Bigl\< \p^{\alpha+1}c , \frac{\p~}{\p(\p^\alpha A_1)} \Bigr\>  \biggr) \\
  + \sum_{\{\alpha \mid \alpha_1=0\}} \biggl( \Bigl\< \p^{\alpha+2}A_3 , \frac{\p~}{\p(\p^\alpha
    A^{+1})} \Bigr\> - \Bigl\< \p^{\alpha+2}c , \frac{\p~}{\p(\p^\alpha A_2)}
  \Bigr\> \biggr) - \sum_{\{\alpha\mid \alpha_1+\alpha_2=0\}} \Bigl\< \p^{\alpha+3}c ,
  \frac{\p~}{\p(\p^\alpha A_3)} \Bigr\> ,
\end{multline*}
\normalsize Taking cohomology, we obtain the graded vector space
$C^*(\g)$ of cochains on the Lie algebra $\g$, generated by $c$. The
differential on $C^*(\g)$ vanishes, since $\g$ is abelian. Denote by
$\tau_{\ge1}C^*(\g)$ the associated reduced complex, obtained by
quotienting $C^*(\g)$ by $C^0(\g)$.

The complex $(H^*(\FF,\tilde\pi),d)$ may be identified with the
augmented de Rham complex
\begin{equation*}
  0 \to \R\eta \to \Om^0 \o C^*(\g) \xrightarrow{d} \Om^1 \o C^*(\g)
  \xrightarrow{d} \to \Om^2 \o C^*(\g) \xrightarrow{d} \Om^3 \o C^*(\g) \longrightarrow
  0 .
\end{equation*}
Applying the Poincar\'e lemma, we see that the $E_2$-page
$H^p(H^q(\FF,\tilde\pi),d)$ for the spectral sequence of the double
complex equals
\begin{equation*}
  H^p(H^q(\FF,\tilde\pi),d) \cong
  \begin{cases}
    C^q(\g) , & p=-3,q>0 , \\
    0 , & \text{otherwise.}
  \end{cases}
\end{equation*}
Clearly, the spectral sequence is convergent (since $E^{pq}_0$
vanishes for $p$ outside the bounded interval $[-4,0]$), and collapses
at the $E_2$-page. This proves the following theorem.
\begin{theorem}
  If $\g$ is abelian, the Batalin-Vilkovisky cohomology of the
  classical Chern-Simons theory associated to $\g$ equals
  \begin{equation*}
    H^k(\CF,\pi) \cong \tau_{\ge1}C^{k+3}(\g) .
  \end{equation*}
\end{theorem}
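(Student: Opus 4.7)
The plan is to convert the computation of $H^*(\CF,\pi)$ into a manageable spectral sequence computation on $\FF$. By the quasi-isomorphism $\FF\to\CF$ established earlier, one has $H^*(\CF,\pi)\cong H^*(\FF,d+\pi)$, and by Proposition~\ref{alpha} conjugation by $\exp(\sigma_i\f^i)$ identifies $d+\pi$ with $d+\tilde\pi$. This exchange is the key first move, because $\tilde\pi$ is the prolongation of an evolutionary vector field, so $d$ and $\tilde\pi$ anticommute and $(\FF^{*,*},d,\tilde\pi)$ is a genuine double complex, whereas $\pi$ alone is an infinite-order operator that does not respect the form grading.

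I would then run the spectral sequence that takes $\tilde\pi$-cohomology first. Following Barnich--Brandt--Henneaux, I would introduce a filtration on the jet coordinates such that the associated-graded piece of $\tilde\pi$ pairs each jet coordinate with a unique derivative of another jet coordinate, producing an acyclic Koszul-type complex. The one coordinate that cannot be paired is the undifferentiated ghost $c$ itself, so $H^*(\FF,\tilde\pi)\cong \Om^*\o C^*(\g)$, with $C^*(\g)$ the exterior algebra generated by $c$; since $\g$ is abelian, the induced Chevalley--Eilenberg differential on $C^*(\g)$ vanishes. The main obstacle here is the combinatorics of the BBH filtration: one must assign each of $\p^\alpha c^+$, $\p^\alpha A^{+i}$, $\p^\alpha A_i$, and $\p^\alpha c$ (with $|\alpha|>0$ for the ghost) to an appropriate level so that the leading-order part of $\tilde\pi$ really is a perfect pairing on every jet coordinate other than $c$, with no unpaired partner left behind.

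On the $E_1$-page, the remaining $d$-differential is exactly the de Rham differential tensored with $C^*(\g)$, augmented by $\R\eta\to\Om^0\o C^*(\g)$. Applying the algebraic Poincar\'e lemma of Theorem~\ref{Poincare} reduces this to its augmentation, concentrating the surviving classes in form-degree $-3$ and killing the degree-zero piece $C^0(\g)$. The spectral sequence converges trivially since the filtration is supported on $p\in[-4,0]$, and collapses at $E_2$, yielding $H^k(\FF,d+\tilde\pi)\cong \tau_{\ge1}C^{k+3}(\g)$, which is the desired formula for $H^k(\CF,\pi)$.
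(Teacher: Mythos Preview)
Your proposal is correct and follows essentially the same route as the paper: pass from $(\CF,\pi)$ to $(\FF,d+\pi)$, conjugate to $(\FF,d+\tilde\pi)$ via Proposition~\ref{alpha}, compute $\tilde\pi$-cohomology first using the Barnich--Brandt--Henneaux filtration on jet coordinates to reduce to $C^*(\g)$ generated by the undifferentiated ghost $c$, then apply the algebraic Poincar\'e lemma to the remaining augmented de~Rham complex, and observe that the bounded spectral sequence collapses at $E_2$. The paper supplies the explicit BBH filtration table and the leading-order form of $\tilde\pi$ that you flag as the main obstacle, but otherwise your outline matches the paper's argument step for step.
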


We now prove that the abelian Chern-Simons theory is a covariant field
theory; this will allow us to obtain explicit formulas for cocycles in
$\CF$ representing elements of the Batalin-Vilkovisky cohomology. The
Chern-Simons theory treats the fields and antifields of the theory
equally, as components of different degrees of a superconnection. It
turns out that the formulas for the contravariant field theory are
simplified if we use the following modification of $D_i$ which treats
the fields and antifields equally:
\begin{equation*}
  D_i = \half \< \p_i A^j , A^+_j \> - \half \< A^j , \p_i A^+_j \>  +
  \half \< \p_i c , c^+ \> - \half \< c , \p_i c^+ \>  , \quad i \in
  \{1,\dots,n\} .
\end{equation*}
The associated operator $\iota_i$ is given by the formula
\begin{align*}
  \iota_i &= \sigma_i \bigl( 1 - C(D_i,-) \bigr) \\
  &= - \half \sigma_i ( \E - 2 ) ,
\end{align*}
where $\E$ is the Euler vector field, which in this setting equals
\begin{equation*}
  \E = \pr \biggl( \Bigl\< A_i , \frac{\p~}{\p A_i} \Bigr\> + \Bigl\<
  A^{+i} , \frac{\p~}{\p A^{+i}} \Bigr\> + \Bigl\< c , \frac{\p~}{\p
    c} \Bigr\> + \Bigl\< c^+ , \frac{\p~}{\p c^+} \Bigr\> \biggr) .
\end{equation*}

Let
\begin{equation*}
  G_i = \half \epsilon_{ijk} \< A^{+j} , A^{+k} \> + \< A_i , c^+ \> \in
  \FF_\circ^{0,-2} .
\end{equation*}
It is clear that $dG_i=0$ for $1\le i\le n$, and that $\iota_iG_j=0$ for
$1\le i,j\le n$.

\begin{lemma}
  $\( \Pi , G_i \) = D_i$
\end{lemma}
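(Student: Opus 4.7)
The proof is a direct computation using the Soloviev bracket formula. I would begin by decomposing $\Pi=\Pi_A+\Pi_c$, where $\Pi_A=\half\epsilon^{jkl}\<A_j,\p_kA_l\>$ is the pure gauge-field piece and $\Pi_c=\half\<\p_jc,A^{+j}\>-\half\<c,\p_jA^{+j}\>$ is the ghost--antifield coupling, and similarly $G_i=G_i^{(1)}+G_i^{(2)}$ with $G_i^{(1)}=\half\epsilon_{ijk}\<A^{+j},A^{+k}\>$ and $G_i^{(2)}=\<A_i,c^+\>$. By bilinearity and inspection of the field content of each piece, the only non-vanishing contributions to $\(\Pi,G_i\)$ come from three pairings: $\(\Pi_A,G_i^{(1)}\)$ contracting through the $(A,A^+)$-conjugate pair, and the two pieces of $\(\Pi_c,G_i^{(2)}\)$ contracting through the $(A,A^+)$-pair and the $(c,c^+)$-pair respectively.

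The simplest contribution is the $(A,A^+)$-piece of $\(\Pi_c,G_i^{(2)}\)$. The $A^{+m}$-jet derivatives of $\Pi_c$ produce $c$-terms that pair against $\p_{A_m,0}G_i^{(2)}=\delta_i^m c^+$ to give exactly $\half\<\p_ic,c^+\>-\half\<c,\p_ic^+\>$, matching the ghost part of $D_i$. The remaining two contributions both produce $A$-$A^+$ expressions. In the $(c,c^+)$-piece of $\(\Pi_c,G_i^{(2)}\)$, the only nonzero $c$-jet derivatives of $\Pi_c$ are $-\half\p_jA^{+j}$ at order zero and $\half A^{+j}$ at first order; after multiplying by $\p^c_0G_i^{(2)}=A_i$ and incorporating the parity sign $(-1)^{\pa(c)}=-1$, these yield two $A$-$A^+$ terms of divergence type.

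The most involved contribution is $\(\Pi_A,G_i^{(1)}\)$. The relevant jet derivatives are $\p_{A_m,0}\Pi_A=\half\epsilon^{mjk}\p_jA_k$, $\p_{A_m,e_r}\Pi_A=\half\epsilon^{lrm}A_l$, and $\p^{A_m}_0G_i^{(1)}=\epsilon_{ims}A^{+s}$, where the factor of two in the last expression comes from the symmetry of the two odd $A^+$ factors in $G_i^{(1)}$. Applying the three-dimensional identity $\epsilon^{mab}\epsilon_{mcd}=\delta^a_c\delta^b_d-\delta^a_d\delta^b_c$ to the two resulting $\epsilon$-contractions produces four $A$-$A^+$ terms. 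Two of these---schematically $\p_sA_iA^{+s}$ and $A_i\p_rA^{+r}$ with summed indices---cancel against the two terms coming from the $(c,c^+)$-piece above; the remaining two combine to $\half\<\p_iA^j,A^+_j\>-\half\<A^j,\p_iA^+_j\>$, the gauge part of $D_i$.

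The main obstacle is sign bookkeeping. Both $c$ and $A^{+i}$ are odd, so the left-Leibniz rules for $\p_{c,\alpha}$, $\p^c_\alpha$, and $\p^{A_m}_\alpha$ introduce several signs, and the parity factor $(-1)^{\pa(x_a)}$ in the Soloviev formula flips the overall sign for $a=c$. Recognizing the answer as $D_i$ in its symmetric form---rather than $\<\p_iA^j,A^+_j\>$ alone, which differs from $D_i$ by a total derivative and therefore represents a different class in $\FF$---depends precisely on the cancellation of the ``stray'' divergence-type terms between the three contributions.
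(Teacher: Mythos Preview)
Your proposal is correct and follows essentially the same approach as the paper's proof: both decompose $\Pi$ and $G_i$, apply the identity $\epsilon^{mab}\epsilon_{mcd}=\delta^a_c\delta^b_d-\delta^a_d\delta^b_c$ to the $\(\Pi_A,G_i^{(1)}\)$ piece, and observe the cancellation of the divergence-type terms $\<\p_jA_i,A^{+j}\>$ and $\<A_i,\p_jA^{+j}\>$ against the $(c,c^+)$-contributions from $\(\Pi_c,G_i^{(2)}\)$. The only organisational difference is that you group the computation by conjugate pair $(A,A^+)$ versus $(c,c^+)$, whereas the paper groups it by the three summands of $\Pi$; the individual terms and the cancellation are identical.
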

\begin{proof}
  In the proof, we make use of the formula
  $\delta^k_r \epsilon^{pqr} \epsilon_{ijk} = \delta^p_i \delta^q_j - \delta^p_j \delta^q_i$. We have
  \begin{align*}
    \( \epsilon^{pqr} \< A_p , \p_qA_r \> , G_i \) &=
    \( \epsilon^{pqr} \< A_p , \p_qA_r \> , \half \epsilon_{ijk} \< A^{+j}
    A^{+k} \> \) \\
    &= \bigl( \delta^q_i \delta^r_k - \delta^q_k \delta^r_i \bigr) \< \p_qA_r , A^{+k} \>
    + \bigl( \delta^p_i \delta^q_k - \delta^p_k \delta^q_i \bigr) \< A_p , \p_q A^{+k} \>  \\
    &= \< \p_iA_j , A^{+j} \> - \< \p_jA_i , A^{+j} \> + \< A_i , \p_j
    A^{+j} \> - \< A_j , \p_i A^{+j} \> \\
    \( \< \p_p c , A^{+p} \> , G_i \) &=
    \( \< \p_p c , A^{+p} \> , \< A_i , c^+ \> \) \\
    &= \< \p_i c , c^+ \> + \< \p_jA_i , A^{+j} \> , \\
    \( \< c , \p_p A^{+p} \> , G_i \) &=
    \( \< c , \p_p A^{+p} \> , \< A_i , c^+ \> \) \\
    &= \< c , \p_i c^+ \> + \< A_i , \p_j A^{+j} \> .
  \end{align*}
  The lemma follows.
\end{proof}

\begin{corollary}
  $dG_i + \( \Pi , G_i \) + \iota_i \Pi = D_i$
\end{corollary}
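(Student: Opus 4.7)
The corollary should follow immediately from combining the lemma with two essentially trivial observations, so the plan is to identify those observations and add.

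First, $dG_i=0$ holds for grading reasons: $G_i\in\FF_\circ^{0,-2}$, and the horizontal differential $d$ sends $\FF^{-p,*}$ to $\FF^{-p+1,*}$. Since $\FF^{-p,*}$ is defined only for $0\le p\le n+1$, the group $\FF^{1,-2}$ in which $dG_i$ would land is zero. Second, $\iota_i\Pi=0$: using the reformulation $\iota_i=-\tfrac12\sigma_i(\E-2)$ recorded just before the lemma, this reduces to the statement that the Euler vector field acts on $\Pi$ with eigenvalue $2$, which is plain because $\Pi$ is homogeneous of degree $2$ in the fields $\{A_i,A^{+i},c,c^+\}$.

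Granted these two facts and the identity $\(\Pi,G_i\)=D_i$ from the preceding lemma, the required equation
\begin{equation*}
dG_i+\(\Pi,G_i\)+\iota_i\Pi=0+D_i+0=D_i
\end{equation*}
is immediate.

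There is no real obstacle: the point of isolating the corollary is that, in combination with the abelian classical master equation $\tfrac12\(\Pi,\Pi\)=0$ and the vanishing $\iota_iG_j=0$ noted in the text, it is exactly the data needed to recognize $S_u=\Pi+u^iG_i$ as a solution of the curved Maurer-Cartan equation \eqref{covariant}, thereby exhibiting the abelian Chern-Simons theory as a covariant classical field theory in the sense of the previous section.
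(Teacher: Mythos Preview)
Your proposal is correct and matches the paper's approach: the corollary is stated without proof, being immediate from the preceding lemma $\(\Pi,G_i\)=D_i$ together with the facts $dG_i=0$ and $\iota_i\Pi=0$, both of which the paper has already recorded as clear in the paragraphs just before the lemma. Your justifications for these two vanishings (degree reasons for $dG_i$, and $\E\Pi=2\Pi$ for $\iota_i\Pi$) simply spell out what the paper left as evident.
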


The Soloviev bracket $\(G_i,G_j\)$ is the symmetrization of the
Soloviev bracket
\begin{equation*}
  \( \half \epsilon_{ik\ell} \< A^{+k} , A^{+\ell} \> , \< A_j , c^+ \> \)
  = - \epsilon_{ijk} \< A^{+k} , c^+ \>
\end{equation*}
in the indices $i$ and $j$, hence it vanishes. This completes the
proof of the following result.
\begin{proposition}
  The abelian Chern-Simons theory $\Pi+u^iG_i$ is a covariant field
  theory.
\end{proposition}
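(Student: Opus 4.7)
The plan is to unpack what it means for $\Pi+u^iG_i$ to be a covariant field theory and verify each component equation using the tools already assembled in this section. Since the proposed action is affine in $u^i$ of the form \eqref{SG}, the covariant classical master equation $\half\(\tint S_u,\tint S_u\)=\tint D_u$ decomposes (as noted after \eqref{SG}) into three pieces: the classical master equation $\half\(\Pi,\Pi\)=0$ for the ordinary action $\Pi$, the equations $dG_i+\iota_i\Pi+\(\Pi,G_i\)=D_i$ for every $i$, and the equations $\iota_iG_j+\iota_jG_i+\(G_i,G_j\)=0$ for every $i,j$. I would prove the proposition by verifying each of these in turn.

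For the first, $\half\(\Pi,\Pi\)=0$ is already established (it was flagged as ``a straightforward calculation'' earlier in the section), so it can be invoked directly. For the second, this is precisely the content of the corollary $dG_i+\(\Pi,G_i\)+\iota_i\Pi=D_i$ stated immediately above the proposition, whose proof in turn rests on the preceding lemma $\(\Pi,G_i\)=D_i$ together with the already-recorded facts $dG_i=0$ and $\iota_i\Pi=0$. So nothing new needs to be done here beyond citing what has just been proved.

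The only remaining ingredient is the third equation, $\iota_iG_j+\iota_jG_i+\(G_i,G_j\)=0$. I would handle the two summands separately: the term $\iota_iG_j$ vanishes by the observation already made just before the lemma ($\iota_iG_j=0$ for all $i,j$), so the equation collapses to $\(G_i,G_j\)=0$. For this, I would compute the Soloviev bracket explicitly, noting that since $G_i=\half\epsilon_{ik\ell}\<A^{+k},A^{+\ell}\>+\<A_i,c^+\>$, the only nonvanishing pairing between a factor of $G_i$ and a factor of $G_j$ under the antibracket is between $\<A_i,c^+\>$ and $\half\epsilon_{jk\ell}\<A^{+k},A^{+\ell}\>$, giving $-\epsilon_{ijk}\<A^{+k},c^+\>$; then $\(G_i,G_j\)$ is the symmetrization in $i,j$ of this antisymmetric expression, hence vanishes.

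None of these steps is really a serious obstacle, since the supporting lemmas have already been stated. The main task is purely bookkeeping: assembling the three equations in the correct form and observing that each follows from a named prior result. I would conclude by remarking that $\tint S_u=\tint\Pi+u^i\tint G_i$ manifestly has total ghost number $0$ and even parity (since $\gh(u^i)=2$ and $\gh(G_i)=-2$ with $G_i$ of even parity), completing the verification that it is a covariant field theory.
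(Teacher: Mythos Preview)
Your proposal is correct and follows essentially the same approach as the paper. The paper assembles the proof exactly as you describe: it invokes $\half\(\Pi,\Pi\)=0$, cites the corollary $dG_i+\(\Pi,G_i\)+\iota_i\Pi=D_i$, uses the already-noted vanishing of $\iota_iG_j$, and then computes $\(G_i,G_j\)$ as the symmetrization in $i,j$ of $\(\half\epsilon_{ik\ell}\<A^{+k},A^{+\ell}\>,\<A_j,c^+\>\)=-\epsilon_{ijk}\<A^{+k},c^+\>$, observing that this symmetrization of an antisymmetric expression vanishes---precisely your argument.
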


Let $\Gamma=\Gamma_1\Gamma_2\Gamma_3$. The operation
$\Gamma:\tau_{\ge1}C^{*+3}(\g)\to(\FF^*,d+\pi)$ induces a quasi-isomorphism of
complexes. Similarly, the operation
\begin{equation*}
  f(c) \in C^k(\g) \mapsto \tint \Gamma f(c) = \tint \bigl( \tint G_1 , \bigl(
  \tint G_2 , \bigl( \tint G_3 , \tint f(c) \bigr)\bigr)\bigr) \in
  \CF^{k-3}
\end{equation*}
induces a quasi-isomorphism of complexes from $\tau_{\ge1}C^{*+3}(\g)$ to
$(\CF^*,\pi)$. Introduce the operations
\begin{align*}
  \CD(A_i) &= \pr \Bigl\< A_i , \frac{\p~}{\p c} \Bigr\> + \half \sigma_i
  (\E-2) , &
  \CD(A^{+i}) &= \pr \Bigl\< A^{+i} , \frac{\p~}{\p c} \Bigr\> , &
  \CD(c^+) &= \pr \Bigl\< c^+ , \frac{\p~}{\p c} \Bigr\>
\end{align*}
on $\FF$. For $f(c)\in C^k(\g)\subset\FF^{0,k}$, we have the formula
\begin{align}
  \label{GGG}
  \Gamma f(c) &= - \CD(A_1)\CD(A_2)\CD(A_3)f(c) +
  \CD(A_i)\CD(A^{+i})f(c) - \CD(c^+)f(c) \\
  & \in \FF^{k-3} . \notag
\end{align}

Given a pair $f(c),g(c)\in C^*(\g)$ of Lie algebra cochains, the bracket
$\( \Gamma f(c), g(c) \)$ is given by the formula
\begin{equation}
  \label{Poisson}
  \{f,g\} = \Bigl\< \frac{\p f(c)}{\p c} , \frac{\p g(c)}{\p c} \Bigr\> .
\end{equation}
In the next section, we extend the results of this section to when
$\g$ is non-abelian.

\section{The covariant non-abelian Chern-Simons action}

In this section, we construct the covariant extension of the classical
field theory of non-abelian Chern-Simons theory, following the approach
of Section~3.

Suppose that the inner product space $\g$ considered in the last
section underlies a real Lie algebra with bracket $[-,-]$ and
invariant inner product $\<-,-\>$. (We do not assume that the inner
product is definite, only that it is nondegenerate.) We consider the
local functional
\begin{equation*}
  \rho = \tfrac{1}{6} \< c , [c,c] \> .
\end{equation*}
This is certainly a cocycle for the abelian Chern-Simons theory of the
previous section, of ghost number $3$.

By \eqref{GGG}, we see that
\begin{align*}
  \Gamma\rho &= \tfrac{1}{6} \epsilon^{ijk} \< A_i , [A_j,A_k] \> - \half
  \< c^+ , [c,c] \> - \< A^{+i} , [A_i,c] \> \\
  &+ \tfrac{1}{4} \sigma_i \Bigl( \< A^{+i} , [c,c] \> +
  \epsilon^{ijk} \< [A_j , A_k ] , c \> \Bigr) \\
  &- \tfrac{1}{16} \sigma_i\sigma_j \epsilon^{ijk} \< A_k , [c,c] \> - \tfrac{1}{288}
  \sigma_i \sigma_j \sigma_k \epsilon^{ijk} \< c , [c,c] \> .
\end{align*}
By \eqref{Poisson}, we see that
\begin{equation*}
  \(\rho,\Gamma\rho\) = - \tfrac{1}{4} \< [c,c] , [c,c] \> = -
  \tfrac{1}{4} \< c, [ c, [c,c]] \> = 0 ,
\end{equation*}
since $[c,[c,c]]$ vanishes by the Jacobi relation. In this way, we
obtain the following explicit solution of the covariant classical
master equation for Chern-Simons theory:
\begin{align*}
  S_u &= \Pi + \Gamma\rho + u^iG_i \\
  &= \cs_3(\AA) - \tfrac{1}{4} \sigma_k \Bigl( \< A^{+k} , [c,c] \> +
  \epsilon^{ijk} \< c , [A_i , A_j ] \> \Bigr) \\
  &\quad - \tfrac{1}{16} \sigma_j\sigma_k \epsilon^{ijk} \< A_i , [c,c] \> + \tfrac{1}{288}
  \sigma_i \sigma_j \sigma_k \epsilon^{ijk} \< c , [c,c] \> + u^iG_i .
\end{align*}

We conclude that the morphism
$\Gamma:\tau_{\ge1}C^{*+k}(\g) \to \FF^*$ of \eqref{GGG} works as well in the
non-abelian case, giving rise to a quasi-isomorphism of complexes of
degree $-3$ from the reduced Lie algebra cochains
$\bigl( \tau_{\ge1}C^*(\g),\delta \bigr)$ of the Lie algebra $\g$ to the
Batalin-Vilkovisky complex $(\FF,d+\s)$ of classical Chern-Simons
theory:
\begin{equation*}
  H^k(\CF,\s) \cong
  \begin{cases}
    H^{k+3}(\g) , & k\ge-2 , \\
    0 , & k<-2 .
  \end{cases}
\end{equation*}
If $\g$ is semisimple, Whitehead's lemma shows that
$H^1(\g) \cong H^2(\g)\cong0$, and hence that $H^*(\CF,\s)$ vanishes in
negative degree. These results were obtained Barnich et al.\
\cite{BBH}*{Section 14}.

Barnich and Grigoriev \cite{BG} have given a geometric interpretation
of this result. (Their results more generally for any AKSZ field
theory: Chern-Simons theory was the original example that led to the
introduction of the class of AKSZ field theories.) If $\g$ is a
reductive Lie algebra over $\R$ with invariant inner product
$\<-,-\>$, the graded manifold $\g[1]$ has the graded commutative
algebra of Lie algebra cochains $C^*(\g)$ as its ring of
functions. Denote the coordinate functions by $c\in C^1(\g)\o\g$. The
inner product determines a symplectic form on $\g[1]$,
\begin{equation*}
  \Om = \half \< dc , dc \> ,
\end{equation*}
with associated Poisson bracket
\begin{equation*}
  \{ f , g \} = (-1)^{|f|} \Bigl\< \frac{\p f}{\p c} , \frac{\p g}{\p
    c} \Bigr\> .
\end{equation*}
The differential $\delta$ on the complex of Lie algebra cochains may be
interpreted geometrically as the Hamiltonian vector field associated
to the cochain $\rho$. In this setting, Barnich and Grigoriev prove the
existence of a quasi-isomorphism
\begin{equation*}
  \bigl( \tau_{\ge1}C^{*+3}(\g) , \delta \bigr) \to (\CF,\s) .
\end{equation*}
In this section, we have obtained a lift of this quasi-isomorphism of
complexes to a quasi-isomorphism of shifted differential graded Lie
algebras.

In the appendix, we prove the purely algebraic result that the
differential graded $(-2)$-shifted Lie algebra $C^*(\g)$ associated to
a reductive Lie algebra $\g$ with invariant inner product $\<-,-\>$ is
homotopy abelian. This implies the following result for the sheaf of
1-shifted differential graded Lie algebras $\CF$ on $\R^3$ associated
to the corresponding Chern-Simons theory.
\begin{theorem}
  The Batalin--Vilkovisky complex $(\CF,\s)$ associated to the
  Chern-Simons theory of a semisimple Lie algebra is homotopy abelian,
  and there is a quasi-isomorphism of differential graded 1-shifted
  Lie algebras $(\tau_{\ge1}H^*(\g),0)\hookrightarrow(\CF,\s)$.
\end{theorem}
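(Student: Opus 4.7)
The plan is to assemble this theorem from the pieces already in place: the quasi-isomorphism of shifted dg Lie algebras $\Gamma:(\tau_{\ge1}C^{*+3}(\g),\delta)\to(\CF,\s)$ constructed in Section 4, the reductive-case identification of cohomology with invariants, Whitehead's lemma, and the algebraic statement of the appendix that $(\tau_{\ge1}C^*(\g),\delta)$ is homotopy abelian for semisimple $\g$. In particular, no substantial new computation is needed at this point; the work is to chain together these results into the desired zigzag.

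First I would consider the zigzag of shifted dg Lie algebras
\begin{equation*}
  (\tau_{\ge1}C^{*+3}(\g)^\g,0) \hookrightarrow (\tau_{\ge1}C^{*+3}(\g),\delta) \xrightarrow{\Gamma} (\CF,\s) .
\end{equation*}
The right-hand arrow is the morphism constructed via the operation $f(c)\mapsto\tint\Gamma f(c)$, which the paper has already shown to be a quasi-isomorphism of differential graded Lie algebras (with the appropriate $1$-shift/$-2$-shift convention). The left-hand arrow is the inclusion of $\g$-invariants; since $\g$ is semisimple, in particular reductive, averaging against the unique normalized invariant projector gives a homotopy equivalence $(C^*(\g),\delta)\simeq (C^*(\g)^\g,0)$ (the differential vanishes on invariants), so the inclusion is a quasi-isomorphism on the truncations. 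By the appendix result, the bracket induced on $\tau_{\ge1}C^*(\g)^\g$ by the $(-2)$-shifted Lie structure on $C^*(\g)$ vanishes, so the source of the zigzag is an abelian graded Lie algebra and the first arrow is genuinely a morphism of dg Lie algebras. This exhibits $(\CF,\s)$ as weakly equivalent to an abelian graded Lie algebra, i.e.\ homotopy abelian.

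To upgrade the source to $\tau_{\ge1}H^*(\g)$, I would invoke Whitehead's lemma: for semisimple $\g$, $H^1(\g)=H^2(\g)=0$, so the natural isomorphism $H^*(\g)\cong C^*(\g)^\g$ (reductivity) identifies $\tau_{\ge1}H^{*+3}(\g)$ with $\tau_{\ge1}C^{*+3}(\g)^\g$ without losing information in low degrees, and the bracket on $\tau_{\ge1}H^*(\g)$ transported from $C^*(\g)^\g$ is the zero bracket (by the appendix). Composing with $\Gamma$ then yields the desired quasi-isomorphism of $1$-shifted differential graded Lie algebras
\begin{equation*}
  (\tau_{\ge1}H^{*+3}(\g),0) \hookrightarrow (\CF,\s) .
\end{equation*}

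The only step requiring care is checking that the left arrow of the zigzag really lands inside a sub-dg-Lie-algebra, i.e.\ that the Chevalley--Eilenberg bracket on the invariants $C^*(\g)^\g$ vanishes (not just after truncation, but at least in the range that matters after truncating); this is exactly the content of the appendix and is where Chevalley's theorem on trace polynomials enters. Assuming that, the main theorem follows by bookkeeping, using that compositions and parallel legs of quasi-isomorphisms are again quasi-isomorphisms.
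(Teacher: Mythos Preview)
Your proposal is correct and follows essentially the same route as the paper: compose the inclusion $\tau_{\ge1}C^*(\g)^\g\hookrightarrow\tau_{\ge1}C^*(\g)$ (a quasi-isomorphism by reductivity, with vanishing bracket on the source by the appendix) with the quasi-isomorphism $\Gamma$ of shifted dg Lie algebras into $(\CF,\s)$, and identify $\tau_{\ge1}H^*(\g)\cong\tau_{\ge1}C^*(\g)^\g$. One minor remark: Whitehead's lemma is not actually needed for the identification $\tau_{\ge1}H^*(\g)\cong\tau_{\ge1}C^*(\g)^\g$---that follows from reductivity alone---the paper invokes it only to observe the separate consequence that $H^*(\CF,\s)$ vanishes in negative degrees.
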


It follows that the deformation theory of $(\CF,\s)$ is unobstructed:
the moduli space of deformations is naturally isomorphic to the space
of invariant inner products on $\g$. Note that the first obstruction
is identically zero, since $H^4(\g)=0$ for a semisimple Lie algebra
$\g$.

\appendix

\section{The Chevalley-Eilenberg complex of a semisimple Lie algebra}

If $\g$ is a reductive Lie algebra, Chevalley and Eilenberg prove that
the inclusion
\begin{equation*}
  C^*(\g)^\g \subset C^*(\g)
\end{equation*}
of invariant cochains is a quasi-isomorphism. Since the differential
vanishes on $C^*(\g)^\g$, there is a natural isomorphism
$H^*(\g)\cong C^*(\g)^\g$.

An invariant inner product $\<-,-\>$ on $\g$ induces a $(-2)$-shifted
Poisson bracket on $C^*(\g)$, making $C^*(\g)$ into a differential
graded $(-2)$-shifted Lie algebra. Let $\theta\in C^1(\g,\g)$ be the
1-cochain on $\g$ given by the identity map; think of $\theta$ as a
coordinate system on the graded manifold $\g[1]$, and cochains as
functions on $\g[1]$. The Poisson bracket of $f\in C^k(\g)$ and
$C^\ell(\g)$ equals
\begin{equation*}
  \{ f , g \} = (-1)^k \Bigl\< \frac{\p f}{\p\theta} , \frac{\p g}{\p\theta}
  \Bigr\> .
\end{equation*}
The differential $\delta$ of $C^*(\g)$ equals $\ad(\rho)$, where is
$\rho$ is the invariant 3-cochain
$\tfrac16 \< \theta , [\theta,\theta] \>$. In particular,
$\delta\theta=-\half[\theta,\theta]$. Let $\Theta=\half[\theta,\theta]$.

The purpose of this appendix is to record a proof of the following
result.
\begin{proposition}
  \label{appendix}
  If $\g$ is semisimple and $f$ and $g$ are invariant cochains,
  $\{f,g\}=0$.
\end{proposition}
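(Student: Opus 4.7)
The plan is to use Chevalley's theorem to reduce the problem to a check on a distinguished set of algebra generators of $C^*(\g)^\g$, and then to establish the vanishing of the bracket on those generators via a cohomological argument. For $\g$ semisimple, the Cartan--Chevalley theorem identifies $C^*(\g)^\g$ with the full cohomology $H^*(\g)$, an exterior algebra on primitive generators of odd degree. Combining this with Chevalley's theorem---that $I(\g)$ is spanned by trace polynomials $\Tr_V\rho(x)^k$---together with the transgression $I(\g)\to H^*(\g)$, the trace cochains
\[
  \omega_{V,k} \;:=\; \Tr_V\rho(\theta)^{2k-1}\in C^{2k-1}(\g)^\g ,
\]
with $\rho\colon\g\to\End(V)$ ranging over finite-dimensional representations and $k\ge 1$, generate $C^*(\g)^\g$ as an algebra. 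Since the Poisson bracket is a graded derivation of the wedge product, it is enough to show $\{\omega_{V,k},\omega_{W,\ell}\}=0$ for every such pair.

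A direct calculation, using graded cyclicity of the trace, gives $\p\omega_{V,k}/\p\theta = \Tr_V(\rho(-)\rho(\theta)^{2k-2})$, viewed via the inner product as a $\g$-valued cochain $\Phi_V\in C^{2k-2}(\g,\g)$. The definition of the bracket then reads
\[
  \{\omega_{V,k},\omega_{W,\ell}\} \;=\; -\<\Phi_V,\Phi_W\> .
\]
The main obstacle is to show that $\Phi_V$ is a coboundary in the Chevalley--Eilenberg complex $C^*(\g,\g)$ with adjoint coefficients. First I would verify that $\Phi_V$ is a cocycle: invariance of $\omega_{V,k}$ gives $\delta\omega_{V,k}=0$ and $L_\xi\omega_{V,k}=0$ for every $\xi\in\g$, so Cartan's formula $[\delta,\iota_\xi]=L_\xi$ forces each scalar component $\Tr_V(\rho(\xi)\rho(\theta)^{2k-2})$ to be a $\delta$-cocycle; the cross-term in the adjoint differential then cancels after a short manipulation exploiting total antisymmetry of the structure constants (i.e., invariance of the inner product). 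Whitehead's lemma---which for $\g$ semisimple yields $H^*(\g,\g)\cong H^*(\g)\otimes\g^\g = 0$ because $\g^\g=Z(\g)=0$---then supplies $\Psi_V\in C^{2k-3}(\g,\g)$ with $\delta_{\mathrm{ad}}\Psi_V=\Phi_V$.

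Finally, because $\delta$ is a graded derivation of the pairing $\<-,-\>$ (by invariance of the inner product) and $\delta_{\mathrm{ad}}\Phi_W=0$, one obtains
\[
  \{\omega_{V,k},\omega_{W,\ell}\} \;=\; -\<\delta_{\mathrm{ad}}\Psi_V,\Phi_W\> \;=\; \pm\,\delta\<\Psi_V,\Phi_W\> ,
\]
exhibiting the bracket as a $\delta$-coboundary in $C^*(\g)$. But this bracket is manifestly invariant, and for $\g$ semisimple the Hodge decomposition induced by the invariant inner product forces $C^*(\g)^\g\cap\mathrm{Im}(\delta)=0$. Therefore $\{\omega_{V,k},\omega_{W,\ell}\}=0$, and by the reduction of the first paragraph the proof is complete.
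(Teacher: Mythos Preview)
Your argument is correct, but it takes a different route from the paper's. Both proofs begin with the same reduction: use Chevalley's theorem to realize $C^*(\g)^\g$ as an exterior algebra on the transgressed trace cochains $\omega_{V,k}=\Tr_V\rho(\theta)^{2k-1}$, and exploit that $\{-,-\}$ is a biderivation to reduce to pairs of generators. From there the paths diverge. The paper observes that $\partial(\tau P)/\partial\theta=\nabla P(\Theta)$ with $\Theta=\tfrac12[\theta,\theta]$, so that $\{\tau P,\tau Q\}=\langle\nabla P(\Theta),\nabla Q(\Theta)\rangle$ is an invariant polynomial of positive degree evaluated at $\Theta$; the key lemma is then the elementary trace identity $\Tr_V\rho(\Theta)^\ell=\Tr_V\rho(\theta)^{2\ell}=\tfrac12\Tr_V[\rho(\theta),\rho(\theta)^{2\ell-1}]=0$, which (via Chevalley's theorem again) forces every $R\in I_+(\g)$ to satisfy $R(\Theta)=0$. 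Your argument instead shows that $\Phi_V=\partial\omega_{V,k}/\partial\theta$ is a $\delta_{\mathrm{ad}}$-cocycle in $C^*(\g,\g)$, invokes $H^*(\g,\g)\cong H^*(\g)\otimes\g^\g=0$ to write $\Phi_V=\delta_{\mathrm{ad}}\Psi_V$, and then uses compatibility of the pairing with the differentials to exhibit $\{\omega_{V,k},\omega_{W,\ell}\}$ as $\delta$-exact; invariance plus the Hodge decomposition finishes. The paper's route is more elementary and self-contained---it never leaves scalar cochains and needs only the trace identity---while yours is more structural, trading the explicit computation for the cohomological vanishing $H^*(\g,\g)=0$. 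Your sketch of why $\Phi_V$ is $\delta_{\mathrm{ad}}$-closed is a bit terse (the ``cross-term'' vanishes because $\partial_a(\delta f)=\delta\partial_a f + c^b_{ad}\theta^d\partial_b f$, which after dualizing by the invariant form is exactly $\delta_{\mathrm{ad}}$), but the claim is correct.
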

This result implies that the differential graded Lie algebra $C^*(\g)$
is formal, and that the differential graded Lie algebra $H^*(\g)$ is
abelian. It follows that the same statement holds for the sheaf of
differential graded Lie algebras $(\CF,\s)$ for the Chern-Simons
theory with semisimple Lie algebra $\g$.

Let $S(\g)$ be the space of polynomials on the finite-dimensional Lie
algebra $\g$ with invariant inner product $\<-,-\>$. The gradient
$\nabla P\in S(\g)\o\g$ of a polynomial $P(x)$ is given by the formula
\begin{equation*}
  \< \nabla P(x) , y \> = \frac{d}{ds}\Bigm|_{s=0} P(x+sy) .
\end{equation*}
Let $I(\g)$ be the subspace of invariant polynomials: $P\in I(\g)$ is
invariant if and only if for all $x,y\in\g$,
\begin{equation*}
  \< \nabla P(x) , [y,x] \> = 0 .
\end{equation*}
Grade the algebra $I(\g)$ by placing polynomials homogeneous of degree
$\ell$ in degree $2\ell$, making it into a graded commutative algebra, and
let $I_+(\g)$ be the subspace of invariant polynomials with vanishing
constant term.

Define a map
\begin{equation*}
  \tau : I(\g) \to C^\ast(\g)[-1]
\end{equation*}
by the formula
\begin{equation*}
  \tau P = (2\ell-1)^{-1} \< \nabla P(\Theta) , \theta \> .
\end{equation*}
As an example, conside the quadratic form $Q(x)=\half\<x,x\>$
associated to the invariant bilinear form $\<x,y\>$ on $\g$, gives
rise to the cocycle $\tau Q = \rho$.
\begin{lemma}
  \label{I}
  Let $\g$ be a reductive Lie algebra. Let $P\in I(\g)$. Then
  \begin{enumerate}[1),nosep]
  \item $P(\Theta)=0$, and
  \item $\p(\tau P)/\p\theta=\nabla P(\Theta)$.
  \end{enumerate}
\end{lemma}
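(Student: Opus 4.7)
The plan is to invoke Chevalley's theorem, quoted in the introduction, which asserts that the graded algebra $I(\g)$ of invariant polynomials on a reductive Lie algebra is generated by the trace polynomials $Q_{\rho,k}(x) = \Tr_V(\rho(x)^k)$, $k \ge 1$, as $\rho\colon \g \to \End(V)$ ranges over finite-dimensional representations. By linearity and homogeneity one may assume $P$ is homogeneous of positive degree $\ell$, so that $P \in I_+(\g)$. Both statements of the lemma then reduce to the case where $P$ is a single trace polynomial. For part 1), this is because $P \mapsto P(\Theta)$ is an algebra homomorphism, so once $Q_{\rho,k}(\Theta)=0$ on each positive-degree generator, it follows that $P(\Theta) = 0$ for every $P \in I_+(\g)$. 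For part 2), both sides of $\partial(\tau P)/\partial \theta = \nabla P(\Theta)$ are linear in $P$ at fixed homogeneous degree, and both vanish on any product $P_1 P_2$ with $P_1, P_2 \in I_+(\g)$, since
\begin{equation*}
  \nabla(P_1 P_2)(\Theta) = P_2(\Theta)\, \nabla P_1(\Theta) + P_1(\Theta)\, \nabla P_2(\Theta) = 0
\end{equation*}
by part 1). Modulo decomposables, therefore, only trace polynomials need be checked.

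For part 1) with $P = \Tr_V(\rho(x)^k)$, the key identity is $\rho(\Theta) = \rho(\theta)^2$ in $C^*(\g) \otimes \End(V)$, which follows from $\Theta = \half[\theta,\theta]$ and the fact that the odd element $\rho(\theta) = \theta^a \rho(e_a)$ satisfies $\rho(\theta)^2 = \half[\rho(\theta),\rho(\theta)] = \rho\bigl(\half[\theta,\theta]\bigr)$. Hence $P(\Theta) = \Tr_V(\rho(\theta)^{2k})$. Super-cyclicity of the trace, applied with $A = \rho(\theta)$ odd and $B = \rho(\theta)^{2k-1}$ also odd (for $k \ge 1$), gives $\Tr_V(\rho(\theta)^{2k}) = -\Tr_V(\rho(\theta)^{2k}) = 0$.

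For part 2) with $P = \Tr_V(\rho(x)^\ell)$, the identity $\langle \nabla P(x), y\rangle = \ell\, \Tr_V(\rho(y)\rho(x)^{\ell-1})$ together with $\rho(\Theta) = \rho(\theta)^2$ yields
\begin{equation*}
  \tau P = \frac{\ell}{2\ell - 1}\, \Tr_V\bigl(\rho(\theta)^{2\ell - 1}\bigr).
\end{equation*}
Applying the odd derivation $\partial/\partial\theta^c$ produces a sum of $2\ell - 1$ terms, each of the shape $\pm\Tr_V\bigl(\rho(\theta)^{i-1}\rho(e_c)\rho(\theta)^{2\ell-1-i}\bigr)$; super-cyclicity then makes each such term equal to $\Tr_V(\rho(e_c)\rho(\theta)^{2\ell - 2})$, and collecting gives $\ell\,\Tr_V(\rho(e_c)\rho(\theta)^{2\ell-2})$, which is exactly the component $\langle \nabla P(\Theta), e_c\rangle$ of the gradient. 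The main bookkeeping hurdle is the careful tracking of signs in super-cyclicity and the graded Leibniz rule; no deeper input beyond these and Chevalley's theorem is required.
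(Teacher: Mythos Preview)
Your proof is correct and follows essentially the same route as the paper: reduce to trace polynomials via Chevalley's theorem, use the key identity $\rho(\Theta)=\rho(\theta)^2$, and then super-cyclicity of the trace for both parts. The only difference is organisational: the paper invokes Chevalley in the form ``$I(\g)$ is \emph{spanned} by trace polynomials'', so linearity alone reduces both statements to the case $P=P_{V,\ell}$, whereas you work with algebra generation and therefore insert an extra step showing both sides of part~2) vanish on decomposables $P_1P_2$. That detour is harmless (and indeed the vanishing of $\tau(P_1P_2)$ appears as the next corollary in the paper), but it is not needed if one uses the spanning formulation directly.
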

\begin{proof}
  Since $\g$ is reductive, $I(\g)$ is spanned by trace polynomials of
  the form
  \begin{equation*}
    P_{V,\ell}(x) = \Tr_V(\rho(x)^\ell) ,
  \end{equation*}
  where $\rho:\g\to\gl(V)$ is a finite-dimensional representation of
  $\g$ and $\ell>0$. (See Humphreys \cite{Humphreys}*{Section~23.1}; this
  is a theorem of Chevalley, which follows from the Weyl character
  formula.) But $\rho(\Theta)=\half[\rho(\theta),\rho(\theta)]=\rho(\theta)^2$, so
  \begin{equation*}
    P_{V,\ell}(\Theta) = \Tr_V(\rho(\theta)^{2\ell}) = \half \Tr_V [ \rho(\theta) , \rho(\theta)^{2\ell-1} ]
    = 0 .
  \end{equation*}
  It follows that
  \begin{equation*}
    \tau P_{V,\ell} = \frac{\ell}{2\ell-1} \Tr_V( \rho(\Theta)^{\ell-1}\rho(\theta) ) =
    \frac{\ell}{2\ell-1} \Tr_V(\rho(\theta)^{2\ell-1}) ,
  \end{equation*}
  and
  \begin{equation*}
    \frac{\p(\tau P_{V,\ell})}{\p\theta} = \ell \rho(\Theta)^{\ell-1} .
    \qedhere
  \end{equation*}
\end{proof}

\begin{corollary}
  If $\g$ is reductive, $\tau$ maps $I^{2\ell}(\g)$ to the space of cocycles
  $Z^{2\ell-1}(\g)\subset C^{2\ell-1}(\g)$, and the restriction of
  $\tau$ to $I^+(\g)^2$ vanishes.
\end{corollary}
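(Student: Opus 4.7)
The plan is to deduce both assertions from Lemma \ref{I}(1) --- the vanishing $P(\Theta) = 0$ for $P \in I(\g)$ --- by a short graded calculation on $C^*(\g)$.

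For the cocycle claim, I would compute $\delta(\tau P)$ for $P \in I^{2\ell}(\g)$ by applying the graded Leibniz rule to $(2\ell-1)^{-1}\<\nabla P(\Theta),\theta\>$. The term involving $\delta\theta = -\Theta$ contributes (up to sign) $\<\nabla P(\Theta),\Theta\>$, which by Euler homogeneity equals $\ell\, P(\Theta)$, and this vanishes by Lemma \ref{I}(1). The other term involves $\delta\nabla P(\Theta)$; for this I would first observe $\delta\Theta = [\theta,\Theta] = \tfrac12[\theta,[\theta,\theta]] = 0$ by the graded Jacobi identity, and then apply the chain rule for the polynomial $\nabla P$ to conclude $\delta\nabla P(\Theta) = 0$. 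Combining these, $\delta(\tau P) = 0$.

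For the vanishing of $\tau$ on $I^+(\g)^2$, suppose $P = Q R$ with $Q,R \in I^+(\g)$. The Leibniz rule for $\nabla$ gives $\nabla P(x) = \nabla Q(x)\, R(x) + Q(x)\, \nabla R(x)$, and substituting $x = \Theta$ kills both summands: the first by $R(\Theta) = 0$ and the second by $Q(\Theta) = 0$, both via Lemma \ref{I}(1). Hence $\nabla P(\Theta) = 0$, and so $\tau P = 0$.

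The only real work is bookkeeping the signs in graded Leibniz and verifying $\delta\Theta = 0$; neither is substantive once Lemma \ref{I} is in hand. In effect, Lemma \ref{I}(1) reduces the whole corollary to the product rule and Euler's homogeneity identity.
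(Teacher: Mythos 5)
Your proof is correct and follows essentially the same route as the paper: apply the differential to $\<\nabla P(\Theta),\theta\>$, use Euler's identity $\<\nabla P(\Theta),\Theta\>=\ell\,P(\Theta)$ together with Lemma~\ref{I}(1), and use the product rule for $\nabla$ plus Lemma~\ref{I}(1) for the vanishing on $I_+(\g)^2$. Your only addition is making explicit the observation $\delta\Theta=0$ (via the Jacobi identity), which the paper leaves implicit when it records only the term where the differential hits $\theta$.
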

\begin{proof}
  Let $P$ be a homogeneous invariant polynomial of degree $\ell>0$. Since
  $d\theta=\Theta$, we have
  \begin{equation*}
    (2\ell-1) \, d(\tau P) = \< \nabla P(\Theta) , \Theta \> = \ell \, P(\Theta) ,
  \end{equation*}
  which vanishes by the lemma.

  Consider $P,Q\in I_+(\g)$, of degree $\ell$ and $m$ respectively. The map
  $T$ vanishes on $PQ$:
  \begin{equation*}
    (2\ell+2m-1) \, \tau(PQ) = \< \nabla P(\Theta) , \theta \> \, Q(\Theta) + P(\Theta) \, \< \nabla Q(\Theta)
    , \theta \> = 0 .
    \qedhere
  \end{equation*}
\end{proof}

We may choose a sequence $P_i\in I^{2\ell_i}(\g)$, $1\le i\le r$, of
homogeneous invariant polynomials whose images form a basis of
$I_+(\g)/I_+(\g)^2$. The numbers $(\ell_1,\dots,\ell_r)$ are called the
exponents of $\g$, and the natural number $r$ is the \textbf{rank} of
$\g$. Chevalley proves that $I(\g)$ is a polynomial algebra with
generators $P_i$.  The map $\tau:I(\g)\to C^*(\g)[-1]$ induces a morphism
from $I_+(\g)/I_+(\g)^2$ to $Z^*(\g)[-1]$. Borel proves that $H^*(\g)$
is an exterior algebra generated by $[\tau P_i]\in H^{2\ell_i-1}(\g)$.

For the classical groups, the generators $P_i\in I(\g)$ are trace
polynomials associated to the fundamental representation, with a
single exception, the Pfaffian $\Pf(x)\in I(\so(2m))$, which is the
difference of two trace polynomials associated to the two half-spinor
representations $\rho_\pm:\so(2m)\to S_\pm$:
\begin{equation*}
  \Pf(x) = i^n \bigl( \Tr_{S_+}\bigl( \rho_+(x)^m \bigr) -
  \Tr_{S_-}\bigl( \rho_-(x)^m \bigr) \bigr) .
\end{equation*}
Chevalley's result is only really needed in the proof of Lemma~\ref{I}
to handle the possible presence of the exceptional Lie algebras.

It follows that the inclusion of the subalgebra of $C^*(\g)$ spanned
by the cocycles
\begin{equation*}
  \tau P_{i_1} \dots \tau P_{i_k} \in Z^{2\ell_{i_1}+\cdots+2\ell_{i_k}-k}(\g) , \qquad 1\le
  i_1<\dots<i_k\le r ,
\end{equation*}
induces a quasi-isomorphism of dg commutative algebras
$H^*(\g)\hookrightarrow C^*(\g)$. The image of this morphism is the subalgebra
$C^*(\g)^\g$ of invariant cochains.

Proposition \ref{appendix} is now a consequence of the following
lemma.
\begin{lemma}
  If $P\in I(\g)$ and $Q\in I(\g)$ are homogeneous of degree $\ell,m>1$,
  \begin{equation*}
    \{ \tau P , \tau Q \} = 0 .
  \end{equation*}
\end{lemma}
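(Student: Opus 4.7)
The plan is to compute $\{\tau P,\tau Q\}$ directly from the bracket formula and the second clause of Lemma~\ref{I}, and then recognize the answer as $R(\Theta)$ for an invariant polynomial $R$ of positive degree, so that the first clause of Lemma~\ref{I} finishes the job. Concretely, I would first use Lemma~\ref{I}(2) to record
\begin{equation*}
  \frac{\p(\tau P)}{\p\theta} = \nabla P(\Theta) , \qquad
  \frac{\p(\tau Q)}{\p\theta} = \nabla Q(\Theta) ,
\end{equation*}
and then substitute into the defining formula for the shifted Poisson bracket on $C^*(\g)$ to get
\begin{equation*}
  \{ \tau P , \tau Q \} = (-1)^{2\ell-1} \bigl\< \nabla P(\Theta) , \nabla Q(\Theta) \bigr\>
  = -\,R(\Theta) ,
\end{equation*}
where
\begin{equation*}
  R(x) = \bigl\< \nabla P(x) , \nabla Q(x) \bigr\> \in S(\g) .
\end{equation*}

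Next I would verify that $R$ is an invariant polynomial. Invariance of $P$ gives the equivariance identity $[x,\nabla P(x)] = 0$ (equivalently, $\nabla P$ intertwines the adjoint action), and likewise for $Q$; combined with invariance of $\<-,-\>$ this shows
\begin{equation*}
  R(\exp(\ad y)\,x) = \bigl\< \exp(\ad y)\nabla P(x) , \exp(\ad y)\nabla Q(x) \bigr\> = R(x) ,
\end{equation*}
so $R\in I(\g)$. Its degree as a homogeneous polynomial on $\g$ is $(\ell-1)+(m-1)=\ell+m-2$, which is strictly positive since $\ell,m>1$. Therefore $R\in I_+(\g)$.

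Finally I would invoke Lemma~\ref{I}(1): for any positive-degree invariant polynomial on $\g$, the evaluation at $\Theta=\tfrac12[\theta,\theta]$ vanishes. Applying this to $R$ gives $R(\Theta)=0$, so $\{\tau P,\tau Q\}=0$, as required. I do not foresee a genuine obstacle here: the entire argument is really bookkeeping around Lemma~\ref{I}, whose nontrivial input (Chevalley's spanning theorem for $I(\g)$, which forces $P(\Theta)=0$ through the identity $\rho(\Theta)=\rho(\theta)^2$) has already been absorbed at that earlier stage. The only point to check carefully is that the two notions of ``gradient evaluated at $\Theta$'' coincide---i.e.\ that $\<\nabla P(\Theta),\nabla Q(\Theta)\>$ computed in the shifted symplectic structure on $C^*(\g)$ really is the polynomial $R(x)=\<\nabla P(x),\nabla Q(x)\>$ evaluated at $x=\Theta$---but this is immediate because the contraction on the $\g$-factor is the same inner product used to define $R$, and the substitution $x\mapsto\Theta$ is a graded-algebra homomorphism $S(\g)\to C^*(\g)$.
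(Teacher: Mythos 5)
Your proposal is correct and is essentially the paper's own argument: apply Lemma~\ref{I}(2) to identify the bracket with $\pm\<\nabla P(\Theta),\nabla Q(\Theta)\>$, observe that $\<\nabla P,\nabla Q\>$ is an invariant polynomial of positive degree $\ell+m-2$, and conclude by Lemma~\ref{I}(1). Your extra care with the sign $(-1)^{2\ell-1}$ and the explicit check of invariance of $\<\nabla P,\nabla Q\>$ are fine but only flesh out steps the paper asserts in passing.
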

\begin{proof}
  By Lemma~\ref{I}
  \begin{equation*}
    \{ \tau P , \tau Q \} = \Bigl\< \frac{\p(\tau P)}{\p\theta} , \frac{\p(\tau
      Q)}{\p\theta} \Bigr\> = \< \nabla P (\Theta) , \nabla Q(\Theta) \> .
  \end{equation*}
  Since $\< \nabla P , \nabla Q \>$ is invariant and homogeneous of degree
  $\ell+m-2>0$, Lemma~\ref{I} implies the vanishing of
  $\< \nabla P(\Theta) , \nabla Q(\Theta) \>$.
\end{proof}

\begin{bibdiv}

\begin{biblist} 

\bib{AS}{article}{
   author={Axelrod, Scott},
   author={Singer, I. M.},
   title={Chern-Simons perturbation theory},
   conference={
      title={Proceedings of the XXth International Conference on
      Differential Geometric Methods in Theoretical Physics, Vol.\ 1, 2},
      address={New York},
      date={1991},
   },
   book={
      publisher={World Sci. Publ., River Edge, NJ},
   },
   date={1992},
   pages={3--45},
}

\bib{BBH}{article}{
   author={Barnich, Glenn},
   author={Brandt, Friedemann},
   author={Henneaux, Marc},
   title={Local BRST cohomology in gauge theories},
   journal={Phys. Rep.},
   volume={338},
   date={2000},
   number={5},
   pages={439--569},
   issn={0370-1573},
}

\bib{BG}{article}{
   author={Barnich, Glenn},
   author={Grigoriev, Maxim},
   title={A Poincar\'{e} lemma for sigma models of AKSZ type},
   journal={J. Geom. Phys.},
   volume={61},
   date={2011},
   number={3},
   pages={663--674},
   issn={0393-0440},
}

\bib{CF}{article}{
   author={Cattaneo, Alberto S.},
   author={Felder, Giovanni},
   title={On the AKSZ formulation of the Poisson sigma model},
   note={EuroConf\'{e}rence Mosh\'{e} Flato 2000, Part II (Dijon)},
   journal={Lett. Math. Phys.},
   volume={56},
   date={2001},
   number={2},
   pages={163--179},
   issn={0377-9017},
}

\bib{CMR}{article}{
   author={Cattaneo, Alberto S.},
   author={Mnev, Pavel},
   author={Reshetikhin, Nicolai},
   title={Classical BV theories on manifolds with boundary},
   journal={Comm. Math. Phys.},
   volume={332},
   date={2014},
   number={2},
   pages={535--603},
   issn={0010-3616},
}

\bib{Darboux}{article}{
   author={Getzler, Ezra},
   title={A Darboux theorem for Hamiltonian operators in the formal calculus
   of variations},
   journal={Duke Math. J.},
   volume={111},
   date={2002},
   number={3},
   pages={535--560},
}

\bib{covariant}{article}{
   author={Getzler, Ezra},
   title={Covariance in the Batalin-Vilkovisky formalism and the
   Maurer-Cartan equation for curved Lie algebras},
   journal={Lett. Math. Phys.},
   volume={109},
   date={2019},
   number={1},
   pages={187--224},
   issn={0377-9017},
}

\bib{GM}{article}{
   author={Goldman, William M.},
   author={Millson, John J.},
   title={The deformation theory of representations of fundamental groups of
   compact K\"{a}hler manifolds},
   journal={Inst. Hautes \'{E}tudes Sci. Publ. Math.},
   number={67},
   date={1988},
   pages={43--96},
   issn={0073-8301},
}

\bib{Humphreys}{book}{
   author={Humphreys, James E.},
   title={Introduction to Lie algebras and representation theory},
   series={Graduate Texts in Mathematics, Vol. 9},
   publisher={Springer-Verlag, New York-Berlin},
   date={1972},
   pages={xii+169},
}

\bib{Olver}{book}{
   author={Olver, Peter J.},
   title={Applications of Lie groups to differential equations},
   series={Graduate Texts in Mathematics},
   volume={107},
   publisher={Springer-Verlag, New York},
   date={1986},
   pages={xxvi+497},
   isbn={0-387-96250-6},
}

\bib{Soloviev}{article}{
   author={Soloviev, Vladimir O.},
   title={Boundary values as Hamiltonian variables. I. New Poisson brackets},
   journal={J. Math. Phys.},
   volume={34},
   date={1993},
   number={12},
   pages={5747--5769},
   issn={0022-2488},
}

\bib{Sullivan}{article}{
   author={Sullivan, Dennis},
   title={Infinitesimal computations in topology},
   journal={Inst. Hautes \'{E}tudes Sci. Publ. Math.},
   number={47},
   date={1977},
   pages={269--331 (1978)},
   issn={0073-8301},
}

\end{biblist}
\end{bibdiv}

\end{document}